\newtheorem{theorem}{Theorem}
\newtheorem{lemma}{Lemma}
\newtheorem{corollary}{Corollary}
\newtheorem{definition}{Definition}
\begin{document}

\title{Fundamental Limitation on the Detectability of Entanglement}

\author{Pengyu Liu}
\affiliation{Center for Quantum Information, Institute for Interdisciplinary Information Sciences, Tsinghua University, Beijing 100084, China}

\author{Zhenhuan Liu}
\affiliation{Center for Quantum Information, Institute for Interdisciplinary Information Sciences, Tsinghua University, Beijing 100084, China}

\author{Shu Chen}
\affiliation{Center for Quantum Information, Institute for Interdisciplinary Information Sciences, Tsinghua University, Beijing 100084, China}

\author{Xiongfeng Ma}
\email{xma@tsinghua.edu.cn}
\affiliation{Center for Quantum Information, Institute for Interdisciplinary Information Sciences, Tsinghua University, Beijing 100084, China}

\begin{abstract}
Entanglement detection is essential in quantum information science and quantum many-body physics. It has been proved that entanglement exists almost surely for a random quantum state, while the realizations of effective entanglement criteria usually consume exponentially many resources with regard to system size or qubit number, and efficient criteria often perform poorly without prior knowledge. This fact implies a fundamental limitation might exist in the detectability of entanglement. In this work, we formalize this limitation as a fundamental trade-off between the efficiency and effectiveness of entanglement criteria via a systematic method to evaluate the detection capability of entanglement criteria theoretically. For a system coupled to an environment, we prove that any entanglement criterion needs exponentially many observables to detect the entanglement effectively when restricted to single-copy operations. Otherwise, the detection capability of the criterion will decay double exponentially. Furthermore, if multicopy joint measurements are allowed, the effectiveness of entanglement detection can be exponentially improved, which implies a quantum advantage in entanglement detection problems. Our results may shed light on why quantum phenomena are difficult to observe in large noisy systems.
\end{abstract}

\date{\today}
\maketitle

Quantum information technology promises advancement in various information processing tasks. Currently, we are in a stage where noisy intermediate-scale quantum devices~\cite{Preskill2018quantumcomputingin} with 50 to 200 qubits can be well manipulated to demonstrate quantum advantages~\cite{arute2019quantum, Gong2021zuchongzhi,zhong2021jiuzhang,madsen2022quantum}. For these devices, entanglement generation is regarded as an important benchmark, while the verification of systems with only 18 qubits is already challenging~\cite{wang2018eighteen}. This is rather counterintuitive as entangled states have been proved to constitute a large proportion of state space~\cite{zyczkowski1998volume,szarek2005volume,gurvits2002largest}, even for highly mixed states~\cite{aubrun2012phase}. 

Among the various detection methods, entanglement witness (EW) criteria are rather straightforward and the most commonly used ones in experiments~\cite{lu2018structure,wang2018eighteen}. However, much evidence shows that the EW criteria are only effective with precise prior knowledge of the target state~\cite{nidari2007witness}. Unpredictable noises in the state preparation could significantly reduce the success probability for EW protocols.

To solve this problem, researchers have developed nonlinear entanglement criteria, such as positive map criteria, including the well-known positive partial transposition (PPT) criterion~\cite{peres1996separability}, computable cross norm or realignment (CCNR) criterion~\cite{chen2002matrix}, and symmetric extension criterion~\cite{GUHNE2009detection}. Although more effective than EW criterion, checking these nonlinear criteria relies heavily on state tomography, which is experimentally unaffordable. In the last few decades, many efforts have been devoted to modifying these powerful entanglement criteria, such as the positive map criteria, to avoid state tomographies~\cite{horodecki2002method,horodecki2003measuring}. 

With the intermediate-scale quantum devices available, entanglement criteria have been applied to various physical systems. For these experiments, the experimental feasibility --- low sample complexity and single-copy compatibility --- becomes a growing concern for criterion design. Protocols like the moment-based PPT and CCNR criteria~\cite{elben2020mixed,yu2021optimal,neven2021symmetry,liu2022detecting} are proposed which can even be realized by single-copy and qubit-wise measurements when combined with the randomized measurements techniques~\cite{van2012measuring,huang2020predicting,brydges2019probing}. Although much more efficient than state tomography, these methods still require a number of measurements that scales exponentially with the system size. In addition to EW and moment-based criteria, many other case studies investigating the detection capability of some specific entanglement criteria~\cite{lu2016universal,collins2016random,bhosale2012entanglement,shapourian2021diagram,jivulescu2014reduction,jivulescu2015thresholds,aubrun2012realigning} also suggest that a trade-off may exist between the effectiveness and the efficiency of entanglement detection. However, a general and quantitative study is still missing.

In this work, we develop a systematic method to upper bound the detection capability of various entanglement criteria, including EW, positive map, and faithful entanglement criteria. We further generalize it to any entanglement criteria with single-copy implementations and theoretically formulate the fundamental trade-off between efficiency and effectiveness, see Theorem \ref{theorem:singlecopy}. Here we give an informal version.
\begin{theorem}[Trade-off between Efficiency and Effectiveness, Informal]
To detect the entanglement of a random state coupled to a $k$-dimensional environment, any entanglement criterion that can be verified experimentally with $M$ observables is either 
\begin{enumerate}
    \item Inefficient: The criterion requires $M=\Omega(k/\ln k)$ observables to verify, or 
    \item Ineffective: The criterion can detect the entanglement successfully with a probability $P=e^{-\Omega(k)}$ even if the state is entangled.
\end{enumerate}
\label{theorem:informal}
\end{theorem}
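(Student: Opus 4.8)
\emph{Proof strategy.} The idea is that \emph{soundness alone} forces any $M$-observable criterion to accept only states that are measurably far from the maximally mixed state, while a random reduced state is almost never that far. Model such a criterion as follows: after Gram--Schmidt in the Hilbert--Schmidt (HS) inner product (which never increases $M$), its observables may be taken traceless and orthonormal, so measuring them returns the vector $\vec v(\rho)=(\tr(O_i\rho))_{i=1}^M$, and the criterion reports ``entangled'' exactly when $\vec v(\rho)$ lies in some acceptance region $R\subseteq\mathbb R^M$. \emph{Soundness}---never flagging a separable state---means precisely $R\cap\vec v(\mathrm{SEP})=\varnothing$. With $\rho_S=\tr_{E}(\ket{\Psi}\!\bra{\Psi})$ for $\ket{\Psi}$ Haar-random on $\mathcal H_S\otimes\mathbb C^{k}$, the detection probability is $P=\Pr_\Psi[\vec v(\rho_S)\in R]$, and the theorem follows from: if $M\le c\,k/\ln k$ for a suitable constant $c$, then $P=e^{-\Omega(k)}$.

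\emph{Geometric reduction.} By the Gurvits--Barnum theorem, $\mathrm{SEP}$ contains an HS ball $B(\mathds{1}/d_S,r)$ about the maximally mixed state with $r=\Theta(1/d_S)$; since the $O_i$ are traceless and orthonormal, $\vec v$ maps this ball onto the Euclidean ball $B(0,r)\subset\mathbb R^M$ (equivalently: any data vector $\vec v$ with $\|\vec v\|_2<r$ is reproduced by the separable state $\mathds{1}/d_S+\sum_i v_iO_i$). Hence $R$ is disjoint from $B(0,r)$, so $\vec v(\rho_S)\in R$ implies $\|\vec v(\rho_S)\|_2\ge r$, and therefore
\[
P\ \le\ \Pr_\Psi\!\Big[\ \sup_{O\in V,\ \|O\|_2=1}\tr(O\rho_S)\ \ge\ r\ \Big],\qquad V:=\operatorname{span}\{O_i\}.
\]
It remains to bound the probability that the random $\rho_S$ is $r$-far from $\mathds{1}/d_S$ along \emph{some} direction of the fixed $M$-dimensional subspace $V$.

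\emph{Concentration and a net.} Fix a traceless $O$ with $\|O\|_2=1$ and write $\ket{\Psi}=g/\|g\|$, $g$ a standard complex Gaussian on $\mathcal H_S\otimes\mathbb C^{k}$; then $\|g\|^{2}\tr(O\rho_S)=g^{\dagger}(O\otimes\mathds{1}_E)g$ is a mean-zero quadratic form in $g$ with $\|O\otimes\mathds{1}_E\|_2^2=k$ and $\|O\otimes\mathds{1}_E\|_\infty\le1$, so a standard tail bound for Gaussian quadratic forms (Hanson--Wright) gives $\Pr[g^{\dagger}(O\otimes\mathds{1}_E)g\ge t]\le e^{-c\min(t^{2}/k,\ t)}$. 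Taking $t=\tfrac14 r\,d_Sk=\Theta(k)$ and adding the $e^{-\Omega(d_Sk)}$ chance that $\|g\|^{2}<d_Sk/2$, one gets $\Pr[\tr(O\rho_S)\ge r/2]\le e^{-\Omega(k)}$; crucially, since $r=\Theta(1/d_S)$ rescales to $t\sim k=\Theta(\|O\otimes\mathds{1}_E\|_2^2)$, the threshold sits exactly at the sub-Gaussian/sub-exponential crossover of Hanson--Wright, so \emph{both} branches are $e^{-\Omega(k)}$. Covering the unit sphere of $V$ by an $(r/4)$-net $\mathcal N$ of size $(9/r)^{M}=(\Theta(d_S))^{M}$ and using that $O\mapsto\tr(O\rho_S)$ is $1$-Lipschitz in the HS norm (since $\|\rho_S\|_2\le1$), the supremum reaching $r$ forces some net point to reach $r/2$, whence
\[
P\ \le\ (\Theta(d_S))^{M}\,e^{-\Omega(k)}\ =\ \exp\!\big(O(M\ln d_S)-\Omega(k)\big),
\]
which is $e^{-\Omega(k)}$ precisely when $M\ln d_S=o(k)$, i.e. $M=o(k/\ln k)$ in the regime $\ln d_S\asymp\ln k$ of interest. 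That the dichotomy is non-vacuous follows from the entanglement phase transition of the induced ensemble \cite{aubrun2012phase}: for $k$ in this range, $\rho_S$ is entangled (in fact with non-positive partial transpose) with probability $1-e^{-\Omega(k)}$.

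\emph{Main obstacle.} The delicate part is the tail estimate: a naive L\'evy bound for $\|\vec v(\rho_S)\|_2$ has Lipschitz constant $\Theta(1)$ and yields only $e^{-\Omega(k/d_S)}$, which is useless once $d_S\gtrsim k$; one genuinely needs the Gaussian-quadratic-form structure and the observation that the relevant deviation $r$ lands on the Hanson--Wright crossover scale, so that no spurious $e^{-\Omega(\sqrt k)}$-type loss occurs. The matching point is that the net has size $(1/r)^{M}=d_S^{M}$, and it is exactly this that caps an \emph{effective} criterion at $M=O(k/\ln k)$ observables---so the logarithmic factor in the statement is structural, not an artifact. The only remaining care is handling decision rules $f$ that are discontinuous (work with the region $R$, not with $f$) or adaptive (add a covering over the concentrated data histories that fix the realized subspace $V(\rho_S)$).
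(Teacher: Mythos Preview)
Your proposal is correct and follows essentially the same route as the paper: both reduce the single-copy criterion to a supremum of linear witnesses over an $M$-dimensional subspace (you via the Gurvits--Barnum ball directly, the paper via hyperplane separation to a ``parameterized EW''), then control that supremum by an $\epsilon$-net of size $d^{O(M)}$ combined with a Gaussian-quadratic-form tail bound (you invoke Hanson--Wright, the paper the closely related Laurent--Massart inequality). The paper packages the argument modularly as Theorem~2 (single EW) $\to$ Theorem~3 (parameterized EW) $\to$ Theorem~4 (single-copy), whereas you do it in one pass, but the ingredients and the resulting $\exp\!\big(O(M\ln d)-\Omega(k)\big)$ bound are the same.
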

Explicitly speaking, we investigate the entanglement within a bipartite system $AB$, and system $R$ is their purification with dimension $k$. The composite system $ABR$ as a whole is in a random pure state. System $R$ can be regarded as the environment of $AB$, representing either the uncontrollable noise or some system that is not of concern. Such a composite system $ABR$ often appears in many-body physics as it can be generated by a generic Hamiltonian. Note that $k$ usually scales exponentially with the environment size. So, according to Theorem \ref{theorem:informal}, the number of observables increases exponentially, and the detection capability decreases double exponentially with the environment size.

To formalize our study quantitatively, here we give a formal definition of density state distribution~\cite{collins2016random, Nechita2007}.

\begin{definition}[$k$-induced Distribution of Density Matrix]
$\pi_{d,k}$ is the distribution in $\mathcal{D}(\mathcal{H})$ induced by the uniform distribution of pure states in $\mathcal{H}\otimes\mathcal{H}_R$, where the dimensions of $\mathcal{H}$ and $\mathcal{H}_R$ are $d$ and $k$ respectively. A state $\rho$ following the distribution $\pi_{d,k}$ can be generated by $\rho=\tr_R(\ketbra{\phi})$, where $\ket\phi$ is a Haar-measured pure state in $\mathcal{H}\otimes\mathcal{H}_R$.
\end{definition}

Let us start with EW criteria. An EW is an observable, $W$, satisfying $\tr(W\rho)\geq 0,\forall \rho\in \mathrm{SEP}$ where $\mathrm{SEP}$ is the set of all separable states. Define the detection capability of an EW criterion with $W$ as
\begin{equation}
    \mathcal{C}_k(W)=\Pr_{\rho\sim\pi_{d,k}}\bqty{\tr(W\rho)<0},
\end{equation}
which represents the portion of states that $W$ can detect. Without loss of generality, hereafter, we assume the two subsystems $A$ and $B$ are equal in dimension, $d_A=d_B=\sqrt{d}$. It has been proved that when $k<cd^{\frac{3}{2}}$, where $c$ is some constant, a state following $\pi_{d,k}$ distribution is entangled with probability $1$ asymptotically~\cite{aubrun2012phase}. Throughout the Letter, we will always assume $k<cd^{\frac{3}{2}}$ so that the definition of $\mathcal{C}_k(W)$ can also be viewed as the ratio of detected states to all entangled states.

Using Laurent-Massart's lemma~\cite{laurent2000adaptive}, we can give an upper bound of the detection capability of EW criteria.
\begin{theorem}[Detection Capability of EW Criteria]\label{theorem:maintheorem}
The detection capability of an EW criterion with $W$ decays at least exponentially with the dimension of the environment
\begin{equation}
    \mathcal{C}_k(W)< 2e^{-(\sqrt{1+\alpha}-1)^2 k}\leq 2e^{-(3-2\sqrt{2}) k},
\end{equation}
where $\alpha=\frac{\tr(W)}{\sqrt{\tr(W^2)}}\geq 1$~\cite{JOHNSTON20181} is a witness-dependent factor.
\end{theorem}

We show the proof of Theorem~\ref{theorem:maintheorem} intuitively in Fig.~\ref{fig:geometry_new}. When $k$ is large, the state distribution $\pi_{d,k}$ converges near the surface of the set of separable states. An entanglement witness can only detect states in a high-dimensional spherical cap due to the constraint of $\tr(W\rho)\geq 0,\forall \rho\in \mathrm{SEP}$. Since a spherical cap in high-dimensional space is exponentially small compared to the ball, $\mathcal{C}_k(W)$ also suffers from an exponential decay. Detailed proofs of this theorem and the rest can be found in the Appendix.

\begin{figure}[ht]
    \centering
    \includegraphics[width=8cm]{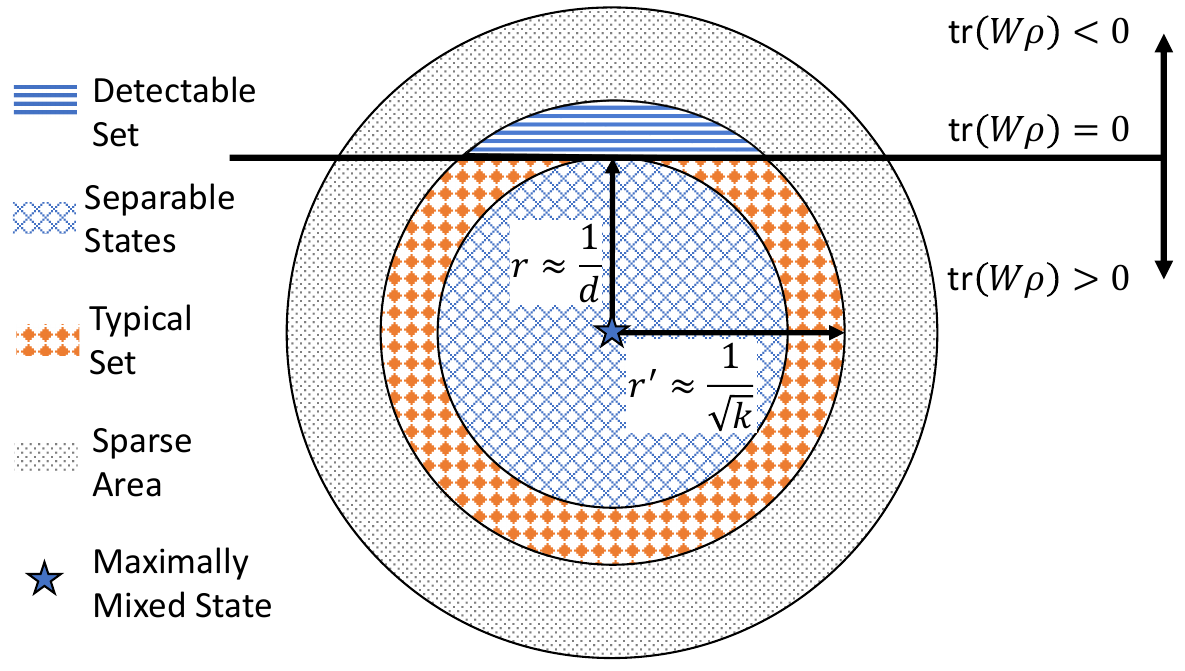}
    \caption{
An intuitive illustration of Theorem~\ref{theorem:maintheorem}. We use three balls in the $(d^2-1)$-dimensional space, $\mathcal{D}(\mathcal{H}_d)$, to represent the states where the outer one contains all the states. The maximally mixed state has purity $\frac{1}{d}$, which lies in the middle of the figure. It is also shown that all states with distance to the maximally mixed state smaller than $r\approx \frac{1}{d}$ are separable~\cite{gurvits2002largest}, which is represented as the inner ball. With state distribution $\pi_{d,k}$, the expected purity can be approximated by $\frac{1}{d}+\frac{1}{k}$~\cite{holgersson2020recent}. As a result, the state distribution will concentrate in the typical set with $r'\approx \frac{1}{\sqrt{k}}$ centered at the maximally mixed state as the middle ball. Outside the typical set is the sparse area, where we can ignore the existence of states. The horizontal line represents the hyperplane defined by $\tr(W\rho)=0$. The states above the hyperplane satisfying $\tr(W\rho)<0$ are detectable by $W$, which forms a high-dimensional spherical cap. We can approximate $\pi_{d,k}$ with a uniform distribution inside the typical set. The detection capability of an EW is bounded by the volume ratio of the detectable set, which is exponentially small and bounded by an order of $e^{-(d^2-2)(\frac{r}{r'})^2/2}\sim e^{-\frac{k}{2}}$~\cite{blum2020foundations}.
    }
    \label{fig:geometry_new}
\end{figure}

This theorem explains why the effectiveness of EW criteria highly depends on the prior knowledge of the studied states, as the detection capability decreases double-exponentially fast with the environment size. It is also worth mentioning that this result holds for multipartite EWs and the leftmost inequality holds for any observable $O$ with a positive trace.

We use two typical examples to support our results. The first example is PPT-type EW, $W=\ketbra{\phi}^{T_A}$, where $T_A$ is the partial transposition operator acting on $\mathcal{H}_A$ and $\ket\phi$ is an arbitrary pure state. In the sense of detection capability, they are optimal EWs as $\alpha$ achieves its minimum value, $\alpha=\frac{\tr(W)}{\sqrt{\tr(W^2)}}=1$, which is irrelevant with the system dimension, $d$. Hence, we have
\begin{equation}
\mathcal{C}_k(\ketbra{\phi}^{T_A})= e^{-\Omega(k)}.
\end{equation} 
In fact, this inequality is rather tight as there exists a constant $c$ such that $\mathcal{C}_k(\ketbra{\phi}^{T_A})\geq e^{-ck}$
according to Ref.~\cite{nidari2007witness}.

The second example is the faithful EW, defined as $W=\frac{\mathbb{I}}{\sqrt{d}}-\ketbra{\Phi}{\Phi}$, where $\mathbb{I}$ is the identity operator and $\ket{\Phi}$ is a maximally entangled state in $\mathcal{H}_A\otimes\mathcal{H}_B$. Such kinds of fidelity-based EWs are commonly used in practical entanglement detection tasks~\cite{wang2018eighteen} as many efficient fidelity estimation protocols exist~\cite{huang2020predicting,flammia2011direct}. However, Theorem \ref{theorem:maintheorem} tells us that such an entanglement witness performs extremely weak in the sense that its detection capability also decreases with system size since $\alpha=\frac{\tr(W)}{\sqrt{\tr(W^2)}}=\sqrt{\frac{d-\sqrt{d}}{2}}$. As a result, 
\begin{equation}
\mathcal{C}_k\left(\frac{\mathbb{I}}{\sqrt{d}}-\ketbra{\Phi}\right)=e^{-\Omega(\sqrt{d}k)}.
\end{equation}

To make our results more convincing, we conduct several numerical experiments, as shown in Fig.~\ref{fig:detectioncap}. We generate random states according to distribution $\pi_{d,k}$ with different values of $d$ and $k$ and use the two kinds of EWs discussed above to detect it. From Fig.~\ref{fig:detectioncap}(a), one could find that the detection capabilities of all types of EWs exponentially decay with $k$. Besides, the slopes of the faithful EW with $d=4$ and two PPT EWs are almost the same, which fulfills the prediction of Theorem \ref{theorem:maintheorem} as $\alpha=1$ for these three EWs. The slope of the faithful EW with $d=9$ is smaller than the other three EWs, reflecting that the value of $\alpha$ for faithful EWs increases with system dimension. In Fig.~\ref{fig:detectioncap}(b), we investigate the relation between detection capability and system dimension. One could find that the detection capabilities of PPT-type EWs have no apparent changes when increasing the system dimension. In comparison, the detection capability of faithful EWs shows exponential decaying behavior, and the slopes decrease as $k$ increases. These phenomena all satisfy our predictions.

\begin{figure}[htbp]
\centering
\includegraphics[width=8.5cm]{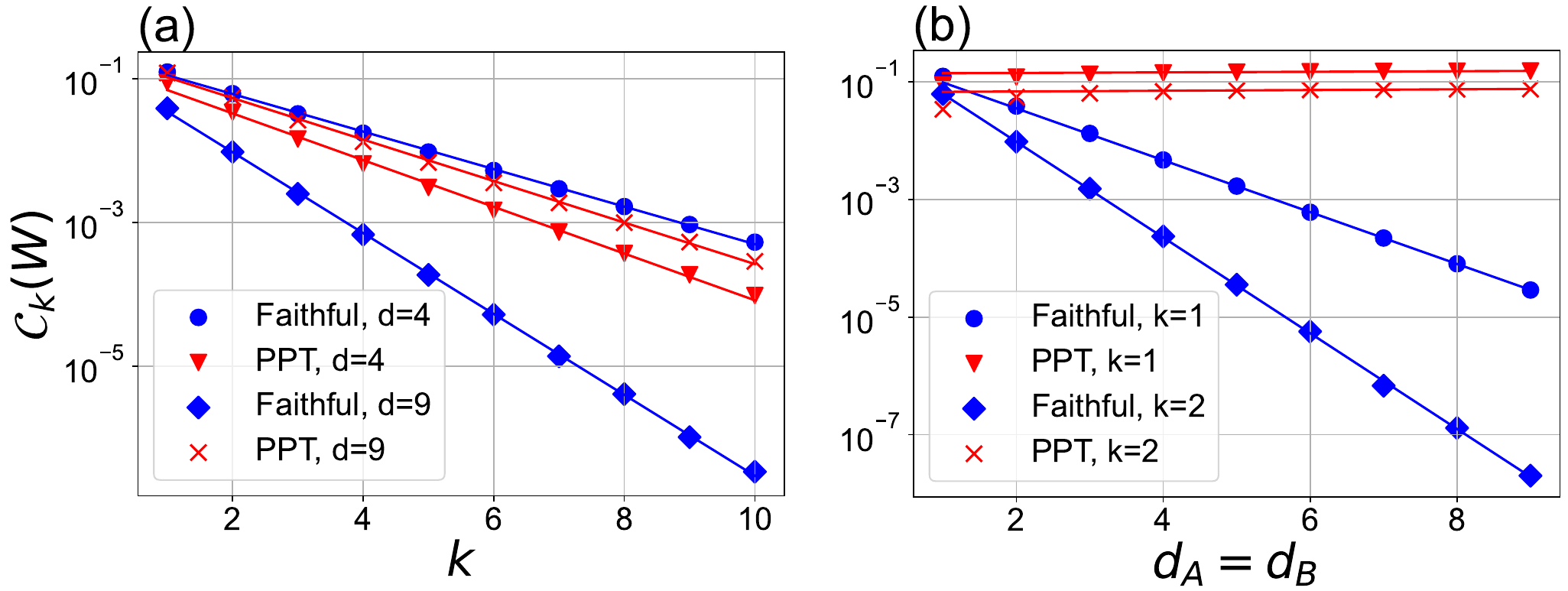}
\caption{Scaling of detection capability of EW criteria with
regard to (a) the environment dimension $k$ and (b) the system
dimension $d_A=d_B$. To numerically calculate the detection capability, we generate $10^8$ density matrices following $\pi_{d,k}$ for each point and treat them as bipartite states $\rho_{AB}$ with $d_A=d_B=\sqrt{d}$. For each randomly generated state, we use two kinds of EWs to detect it: the PPT type EWs $W=\ketbra{\phi}^{T_A}$ and the faithful EWs $W=\frac{\mathbb{I}}{\sqrt{d}}-\ketbra{\psi}$. Here $\ket{\phi}$ is a random state and $\ket{\psi}=U_A\otimes U_B\sum_{i=1}^{\sqrt{d}}\ket{ii}$ is a random maximally entangled state, where $U_A$ and $U_B$ follow Haar-measure distribution for each randomly sampled state $\rho_{AB}$. The straight lines are linear regression results with absolute slopes all larger than the slopes predicted by Theorem~\ref{theorem:maintheorem}.}
    \label{fig:detectioncap}
\end{figure}
Since EW criteria highly depend on prior knowledge to succeed, a direct improvement is to combine a large number of EWs. Naturally, we define an EW set $\mathcal{W}=\Bqty{W_i,i=1\cdots N}$ and the corresponding detection capability as
\begin{equation}
    \mathcal{C}_k(\mathcal{W})=\Pr_{\rho\sim\pi_{d,k}}\bqty{\exists W\in \mathcal{W}: \tr(W\rho)<0}.
\end{equation}
By using the union bound, we can show that the detection
capability of the finite EW set still decreases exponentially
when $k$ is large:
\begin{equation}\label{eq:manyEWs}
    \mathcal{C}_k\pqty{\mathcal{W}}< 2N{e}^{-(\sqrt{1+\alpha_{\min}}-1)^2 k}\le 2{e}^{\ln(N)-(3-2\sqrt{2}) k}
\end{equation}
where $\alpha_{\min}=\min_{W\in \mathcal{W}}\frac{\tr(W)}{\sqrt{\tr(W^2)}}\geq 1$. Therefore, to effectively detect entanglement, a total number of ${e}^{\Omega(k)}$ EWs is required, which is extremely impractical.

There are many other theoretically attractive entanglement criteria and concepts based on EWs. Examples like the positive map criteria~\cite{GUHNE2009detection} and faithful entanglement~\cite{weilenmann2020faithful,gunhe2021geometry,riccardi2021exploring} are equivalent to infinitely many EWs. As a result, Eq.~\eqref{eq:manyEWs} does not apply directly. To adapt the previous theorem to the infinite case, here we define parameterized EW criteria. 
\begin{definition}[Parameterized EW Criteria]\label{def:paraEW}
A parameterized EW criterion is a set of an infinite number of EWs, which can be represented by a map $\mathcal{M}$ from $M$ real parameters to EWs in $\mathcal{D}(\mathcal{H})$, satisfying
\begin{equation}
    \forall \vb* \theta\in\Theta\subset[-1,1]^M,\forall\rho\in \mathrm{SEP}:\tr\left[\rho \mathcal{M}(\vb* \theta)\right]\geq 0,
\end{equation}
where $\mathcal{M}(\vb*\theta)$ is a normalized EW satisfying $\norm{\mathcal{M}(\theta)}_F=1$ with $\norm{A}_F=\sqrt{\sum_{i,j}|A_{i,j}|^2}$ being the Frobenius norm and $\Theta$ is the feasible parameter space ensuring $\mathcal{M}(\vb* \theta)$ a valid EW. A state $\rho$ can be detected by this criterion if and only if 
\begin{equation}
    \exists \vb*\theta\in\Theta: \tr\pqty{\rho \mathcal{M}(\vb* \theta)}<0.
\end{equation}
\end{definition}

Similarly, we can define the detection capability of a parameterized EW as
\begin{equation}
    \mathcal{C}^p_k(\mathcal{M})=\Pr_{\rho\sim\pi_{d,k}}\bqty{\exists \vb*\theta\in\Theta: \tr(\rho \mathcal{M}(\vb* \theta))<0}.
\end{equation}
By using a coarse-graining method and adopting Theorem \ref{theorem:maintheorem}, we can derive an upper bound for $\mathcal{C}^p_k(\mathcal{M})$.

\begin{theorem}[Detection Capability of Parameterized EW Criteria]
For any parameterized EW represented by a normalized $l$-Lipschitz map $\mathcal{M}$ satisfying
\begin{equation}
\forall \vb*\theta,\vb*\theta'\in \Theta:\norm{\mathcal{M}(\vb*\theta)-\mathcal{M}(\vb*\theta')}_F\leq l\norm{\vb*\theta-\vb*\theta'}_2,
\end{equation}
the detection capability decays at least exponentially with $k$ after $k$ exceeds a certain threshold,
\begin{equation}
    \mathcal{C}^p_k(\mathcal{M})< 2e^{C_1-C_2k},
\end{equation}
where $C_1=M\ln 4\sqrt{M}ld$, $M$ is the number of real parameters in $\mathcal{M}$, $C_2=(\sqrt{0.5+\alpha_{\min}}-1)^2$ where $\alpha_{\min}=\min_{\vb*\theta} \frac{\tr[\mathcal{M}(\vb*\theta)]}{\sqrt{\tr[\mathcal{M}(\vb*\theta)^2]}}=\min_{\vb*\theta} \tr[\mathcal{M}(\vb*\theta)]\geq 1$.
\label{theorem:paraEWs}
\end{theorem}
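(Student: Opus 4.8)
The plan is to discretize the infinite family $\{\mathcal{M}(\vb*\theta)\}_{\vb*\theta\in\Theta}$ by an $\varepsilon$-net in parameter space, turn each ``$\varepsilon$-approximate'' witness inequality into a genuine positive-trace observable, and then control the detection probability of each of the finitely many resulting observables with the leftmost inequality of Theorem~\ref{theorem:maintheorem} (which, as remarked there, applies to every observable of positive trace), finishing with a union bound.

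Concretely, I would fix $\varepsilon=\tfrac{1}{2ld}$ and choose $\{\vb*\theta_1,\dots,\vb*\theta_N\}\subset\Theta$ to be an $\varepsilon$-net of $\Theta$ in the Euclidean metric; since $\Theta\subset[-1,1]^M$, a standard product-grid covering bound gives $N\le(4\sqrt M\,l\,d)^M$, i.e. $\ln N\le M\ln(4\sqrt M\,l\,d)=C_1$. For an arbitrary $\vb*\theta\in\Theta$, pick a net point $\vb*\theta_j$ with $\norm{\vb*\theta-\vb*\theta_j}_2\le\varepsilon$; the $l$-Lipschitz hypothesis yields $\norm{\mathcal{M}(\vb*\theta)-\mathcal{M}(\vb*\theta_j)}_F\le l\varepsilon$, and combining this with $\norm{\rho}_F=\sqrt{\tr(\rho^2)}\le 1$ and $|\tr(\rho X)|\le\norm{\rho}_F\norm{X}_F$ shows that $\tr(\rho\mathcal{M}(\vb*\theta))<0$ forces $\tr(\rho\mathcal{M}(\vb*\theta_j))<l\varepsilon$. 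Setting $O_j:=\mathcal{M}(\vb*\theta_j)-l\varepsilon\,\mathbb{I}$ and using $\tr\rho=1$, the detection event $\{\exists\vb*\theta\in\Theta:\tr(\rho\mathcal{M}(\vb*\theta))<0\}$ is contained in $\bigcup_{j=1}^{N}\{\tr(\rho O_j)<0\}$, so $\mathcal{C}^p_k(\mathcal{M})\le\sum_{j=1}^{N}\mathcal{C}_k(O_j)$ by the union bound.

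It then remains to estimate each $\mathcal{C}_k(O_j)$. A one-line computation with $\varepsilon=\tfrac{1}{2ld}$ gives $\tr(O_j)=\tr(\mathcal{M}(\vb*\theta_j))-\tfrac12\ge\alpha_{\min}-\tfrac12>0$ (here $\tr(\mathcal{M}(\vb*\theta_j)^2)=1$ and $\tr(\mathcal{M}(\vb*\theta_j))\ge\alpha_{\min}\ge1$ because $\mathcal{M}(\vb*\theta_j)$ is a normalized EW) and $\tr(O_j^2)=1-\tfrac1d\bigl(\tr(\mathcal{M}(\vb*\theta_j))-\tfrac14\bigr)<1$. Hence $\alpha(O_j)=\tr(O_j)/\sqrt{\tr(O_j^2)}\ge\tr(O_j)\ge\alpha_{\min}-\tfrac12$, i.e. $1+\alpha(O_j)\ge0.5+\alpha_{\min}$. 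Since $x\mapsto(\sqrt{1+x}-1)^2$ is nondecreasing for $x\ge0$, Theorem~\ref{theorem:maintheorem} gives $\mathcal{C}_k(O_j)<2\mathrm{e}^{-(\sqrt{0.5+\alpha_{\min}}-1)^2k}=2\mathrm{e}^{-C_2k}$ for every $j$, and summing over the $N$ terms yields $\mathcal{C}^p_k(\mathcal{M})<2N\mathrm{e}^{-C_2k}\le2\mathrm{e}^{C_1-C_2k}$; the ``threshold'' in the statement is just $k>C_1/C_2$, beyond which this bound is nontrivial.

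The individual steps are routine, so the real content is calibrating the net scale $\varepsilon$ and book-keeping the shift $l\varepsilon\,\mathbb{I}$: it must be small enough that $O_j$ keeps a positive trace and that $\tr(O_j^2)$ stays $\le1$ (so that $\alpha(O_j)$ is not degraded below $\alpha_{\min}$ by more than the $1/2$ allowed in $C_2$), yet not so small that $\ln N\sim M\ln(1/\varepsilon)$ swamps $C_2 k$; the choice $\varepsilon=1/(2ld)$ threads this and is exactly what produces the $0.5$ in $C_2$ and the $4\sqrt M\,l\,d$ in $C_1$. Two minor points need attention: the net must be taken inside $\Theta$ (so that the Lipschitz estimate, the normalization $\norm{\mathcal{M}(\vb*\theta_j)}_F=1$, and the bound $\tr(\mathcal{M}(\vb*\theta_j))\ge\alpha_{\min}$ all hold at the net points), and one should record that $4\sqrt M\,l\,d\ge1$ in the regime of interest so that $\ln N\le M\ln(4\sqrt M\,l\,d)$.
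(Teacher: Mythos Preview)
Your proposal is correct and follows essentially the same route as the paper's proof: an $\varepsilon$-net on $\Theta$ (with the same scale $\varepsilon=1/(2ld)$, i.e.\ the paper's $l\delta d=\epsilon=\tfrac12$), the same shifted observables $O_j=\mathcal{M}(\vb*\theta_j)-l\varepsilon\,\mathbb{I}$, the same $\alpha$-computation yielding $\alpha(O_j)\ge\alpha_{\min}-\tfrac12$, and the same union bound over $N\le(4\sqrt{M}ld)^M$ terms. The only cosmetic difference is that you pass from $\|\mathcal{M}(\vb*\theta)-\mathcal{M}(\vb*\theta_j)\|_F\le l\varepsilon$ to $\tr(\rho O_j)<0$ via the Cauchy--Schwarz/Frobenius bound $|\tr(\rho X)|\le\|\rho\|_F\|X\|_F$, whereas the paper uses the operator inequality $\mathcal{M}(\vb*\theta)\ge\mathcal{M}(\vb*\theta_j)-l\varepsilon\,\mathbb{I}$ (from $\|\cdot\|_{\mathrm{op}}\le\|\cdot\|_F$); both are valid and yield the same inclusion of events.
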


The definition of a parameterized EW criterion naturally covers positive map criteria. If a state $\rho$ does not satisfy $\mathcal{N}_A\otimes\mathbb{I}_B(\rho)\geq 0$ for a positive map $\mathcal{N}$, then $\exists \ket\phi: \tr\left[\rho\mathcal{N}_A\otimes\mathbb{I}_B(\ketbra{\phi})\right]<0$. Regarding $\ket{\phi}$ as the parameters $\vb* \theta$ in theorem~\ref{theorem:paraEWs}, this theorem can be applied directly. We leave the detailed discussion in the Appendix.

Another example of parameterized EW is the faithful entanglement, proposed in~\cite{weilenmann2020faithful}, which refers to those entangled states detected by faithful EWs as defined before. We define a parameterized EW that is equivalent to all the faithful EWs as $\mathcal{M}_{\mathrm{faithful}}(\vb*\theta)=\pqty{\sqrt{2-\frac{2}{\sqrt{d}}}}^{-1}\pqty{\frac{\mathbb{I}}{\sqrt{d}}-\ketbra{\phi(\vb*\theta)}}$, where $\ket{\phi(\vb*\theta)}$ is a maximally entangled state~\cite{gunhe2021geometry}. One could prove that $\mathcal{M}_{\mathrm{faithful}}(\vb*\theta)$ is at least $\sqrt{2}$-Lipschitz and $\alpha_{\mathrm{min}}=\sqrt{\frac{d-\sqrt{d}}{2}}\approx\sqrt{\frac{d}{2}}$ when $d$ is large. So that an upper bound for the ratio of faithful entangled states can be summarized below using Theorem \ref{theorem:paraEWs}.

\begin{corollary}[Ratio of Faithful Entanglement States]
The set of faithful entangled states has an exponentially small ratio in the state space:
\begin{equation}
        \Pr_{\rho\sim \pi_{d,k}}[\rho\in \mathrm{FE}]=\mathcal{C}^p_k\pqty{\mathcal{M}_{\mathrm{faithful}}}< 2e^{C_1-C_2k}
\end{equation}
where $\mathrm{FE}$ is the set of all faithful entangled states and $C_1=3d\ln 4d$, $C_2=\pqty{\sqrt{0.5+\sqrt{\frac{d-\sqrt{d}}{2}}}-1}^2\approx \sqrt{\frac{d}{2}}$.
\end{corollary}

This result shows when $k=\Omega(\sqrt{d}\ln d)$, the faithful EWs can hardly detect entanglement, which is compatible with the numerical results shown in Ref.~\cite{gunhe2021geometry}.

Besides positive map and faithful criteria, there are many other entanglement criteria designed for different scenarios, like the one based on the state moments~\cite{imai2021bound,elben2020mixed,liu2022detecting}, uncertainty relations~\cite{duan2000inseparability,gunhe2004uncertainty}, and machine learning~\cite{gray2018machine,yin2022efficient}. They may use complex mathematical relations and complicated postprocessing to detect the entanglement. While limited by the basic principles of quantum mechanics and current technology, only values like $\tr(O\rho)$ can be measured directly. Hence, we propose a general definition of entanglement criteria with single-copy realizations.

\begin{definition}[Single-Copy Criteria]\label{def:singlecopy}
An entanglement criterion is said to have a single-copy realization if it can be checked by the expectation of a set of observables $\mathcal{O}=\Bqty{O_i|i=1,\cdots, M}$. 
After the measurement, one gets the results, $r_{\rho,i}=\tr(O_i\rho), i=1,\cdots,M$, and can decide the feasible region $F_\mathcal{O}(\rho)$ of the state
\begin{equation}
    F_\mathcal{O}(\rho)=\Bqty{\sigma\in\mathcal{D}(\mathcal{H}_d)|\tr(O_i\sigma)=r_{\rho,i},i=1,\cdots,M}.
\end{equation}
If 
\begin{equation}
    F_{\mathcal{O}}(\rho)\cap \mathrm{SEP}=\varnothing\label{eq:singlecopycriterion},
\end{equation}then $\rho$ is entangled.
\end{definition}

According to this definition, we can define the detection capability of the single-copy criterion $\mathcal{O}$ as
\begin{equation}\label{eq:singlecopycap}
    \mathcal{C}_k^s(\mathcal{O})=\Pr_{\rho\sim\pi_{d,k}}\bqty{F_{\mathcal{O}}(\rho)\cap \mathrm{SEP}=\varnothing}.
\end{equation}
Since the verification of Eq.~\eqref{eq:singlecopycriterion} might require exponentially many classical resources, many practical entanglement criteria are essentially designed by finding supersets of $\mathrm{SEP}$ and $F_{\mathcal{O}}(\rho)$ and deciding whether these two supersets are disjoint or not. Therefore, the previous definition is the strongest criterion using the measurement results of $\mathcal{O}$, and Eq.~\eqref{eq:singlecopycap} gives an upper bound for all criteria using the same data.

Without loss of generality, we could assume that all the observables are mutually orthogonal and normalized, i.e., $\tr(O_iO_j)=\delta_{i,j}$. In the Appendix, we prove that if a state $\rho$ can be detected by a single-copy criterion $\mathcal{O}$ which contains $M-1$ observables, then it can be detected by a $1$-Lipschitz parameterized EW with $M$ parameters.
\begin{equation}
\mathcal{M}(\theta_0,\theta_1,\cdots,\theta_{M-1})=\theta_0\mathbb{I}+\sum_{i=1}^{M-1}\theta_iO_i
\end{equation}
Hence, directly adopting Theorem \ref{theorem:paraEWs}, one can give an upper bound for $\mathcal{C}_k^s(\mathcal{O})$.

\begin{theorem}[Detection Capability of Single-Copy Criteria]\label{theorem:singlecopy}
Any single-copy entanglement criterion $\mathcal{O}$ with $M-1$ observables has detection capability
\begin{equation}
    \mathcal{C}^s_k(\mathcal{O})< 2e^{C_1-C_2k},
\end{equation}
where $C_1=M\ln 4\sqrt{M}d$, $C_2=(\sqrt{1.5}-1)^2\approx 0.05$. 
\end{theorem}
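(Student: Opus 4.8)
The plan is to reduce the statement to Theorem~\ref{theorem:paraEWs} by showing that every state detectable by a single-copy criterion $\mathcal O=\{O_1,\dots,O_{M-1}\}$ is also detectable by a suitable $1$-Lipschitz parameterized EW built from the very same observables, so that $\mathcal C^s_k(\mathcal O)\le\mathcal C^p_k(\mathcal M)$ for that $\mathcal M$. As remarked before the statement, we may assume $\tr(O_iO_j)=\delta_{i,j}$. First I would adjoin the normalized identity $O_0:=\mathbb I/\sqrt d$, which changes neither $F_{\mathcal O}(\rho)$ as a set of density matrices (the outcome $\tr(O_0\rho)=1/\sqrt d$ is fixed by the trace condition) nor $\mathcal C^s_k(\mathcal O)$, so that $O_0,\dots,O_{M-1}$ are $M$ orthonormal Hermitian operators, and set
\begin{equation}
  \mathcal M(\vb*\theta)=\sum_{i=0}^{M-1}\theta_i O_i ,
\end{equation}
with $\vb*\theta$ ranging over the set $\Theta$ of unit vectors in $[-1,1]^M$ for which $\mathcal M(\vb*\theta)$ is a valid normalized EW. Because $\{O_i\}_{i=0}^{M-1}$ is orthonormal, $\vb*\theta\mapsto\mathcal M(\vb*\theta)$ is a linear isometry from $(\mathbb R^M,\norm{\cdot}_2)$ into the Frobenius-normed space of Hermitian operators; hence $\norm{\mathcal M(\vb*\theta)}_F=\norm{\vb*\theta}_2=1$, the map is $1$-Lipschitz, $\Theta\subseteq[-1,1]^M$ automatically, and $\alpha(\vb*\theta)=\tr[\mathcal M(\vb*\theta)]/\norm{\mathcal M(\vb*\theta)}_F=\tr[\mathcal M(\vb*\theta)]\ge1$ by the general EW inequality of Ref.~\cite{JOHNSTON20181}. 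So $\mathcal M$ satisfies the hypotheses of Theorem~\ref{theorem:paraEWs} with $l=1$ and $\alpha_{\min}\ge1$.

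The key step — and the one I expect to need the most care — is to establish $\mathcal C^s_k(\mathcal O)\le\mathcal C^p_k(\mathcal M)$, i.e., that $F_{\mathcal O}(\rho)\cap\mathrm{SEP}=\varnothing$ forces $\tr\pqty{\rho\,\mathcal M(\vb*\theta)}<0$ for some $\vb*\theta\in\Theta$. I would pass to outcome space: send each state $\sigma$ to its outcome vector $v(\sigma)=\pqty{\tr(O_1\sigma),\dots,\tr(O_{M-1}\sigma)}\in\mathbb R^{M-1}$. Then $F_{\mathcal O}(\rho)$ is exactly the set of states $\sigma$ with $v(\sigma)=v(\rho)$, so the hypothesis says $v(\rho)\notin K:=v(\mathrm{SEP})$, a compact convex subset of $\mathbb R^{M-1}$. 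The separating-hyperplane theorem then supplies $\vb*a\in\mathbb R^{M-1}$ and $b\in\mathbb R$ with $\vb*a\cdot v(\sigma)\ge b>\vb*a\cdot v(\rho)$ for every $\sigma\in\mathrm{SEP}$. Put $W_0=\sum_{i=1}^{M-1}a_iO_i-b\,\mathbb I$, which lies in $\mathrm{span}\{O_0,\dots,O_{M-1}\}$ since $\mathbb I=\sqrt d\,O_0$; then $\tr(W_0\sigma)=\vb*a\cdot v(\sigma)-b\ge0$ for all $\sigma\in\mathrm{SEP}$ and $\tr(W_0\rho)=\vb*a\cdot v(\rho)-b<0$, so $W_0$ is a genuine nonzero EW that detects $\rho$, and $W_0/\norm{W_0}_F=\mathcal M(\vb*\theta)$ for some $\vb*\theta\in\Theta$. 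Thus every state counted in $\mathcal C^s_k(\mathcal O)$ is counted in $\mathcal C^p_k(\mathcal M)$.

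With the reduction in hand, the bound is immediate from Theorem~\ref{theorem:paraEWs}: $\mathcal M$ has $M$ real parameters, $l=1$, and $\alpha_{\min}\ge1$, giving $C_1=M\ln(4\sqrt M\,l\,d)=M\ln(4\sqrt M\,d)$ and $C_2=(\sqrt{0.5+\alpha_{\min}}-1)^2\ge(\sqrt{1.5}-1)^2\approx0.05$, whence $\mathcal C^s_k(\mathcal O)\le\mathcal C^p_k(\mathcal M)<2\mathrm e^{C_1-C_2k}$.

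The hard part, concretely, is the bookkeeping around the separating hyperplane: one must work in the $(M-1)$-dimensional outcome space rather than in operator space so that the separating functional can be lifted to an operator supported on $\mathrm{span}\{O_0,\dots,O_{M-1}\}$ — the conceptual point being that a criterion that only ever sees the data $v(\rho)$ cannot distinguish directions outside this span — and then check that the lifted $W_0$ is nonnegative on \emph{all} of $\mathrm{SEP}$ rather than merely on $\partial K$, that its normalization is compatible with Definition~\ref{def:paraEW}, and that $\Theta\subseteq[-1,1]^M$ and $\alpha_{\min}\ge1$. All of these should follow cleanly from orthonormality of $\{O_i\}$ and the general EW inequality, leaving Theorem~\ref{theorem:paraEWs} applicable essentially verbatim, with no new probabilistic input beyond what already feeds Theorems~\ref{theorem:maintheorem} and~\ref{theorem:paraEWs}.
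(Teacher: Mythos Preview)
Your proposal is correct and follows essentially the same approach as the paper: adjoin the normalized identity, use a separating-hyperplane argument to produce an EW in $\mathrm{span}\{O_0,\dots,O_{M-1}\}$, identify this span with a $1$-Lipschitz parameterized EW on $M$ parameters, and invoke Theorem~\ref{theorem:paraEWs}. The only cosmetic difference is where the hyperplane separation is carried out: the paper extends $F_{\mathcal O}(\rho)$ to an affine subspace of \emph{all} Hermitian matrices, separates it from $\mathrm{SEP}$ there, and then argues by unboundedness that the separator cannot have any component orthogonal to the $O_i$; you instead project to the $(M-1)$-dimensional outcome space, separate the point $v(\rho)$ from $v(\mathrm{SEP})$, and lift back --- these are two packagings of the same linear-algebra fact.
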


Theorem \ref{theorem:singlecopy} theoretically formulates the trade-off between the effectiveness and sample complexities of entanglement criteria. According to this theorem, at least $\Omega(\frac{k}{\ln k})$ observables are needed to effectively detect the entanglement of a random state, even assuming the measurement results are infinitely accurate. Besides, compared with Eq.~\eqref{eq:manyEWs}, one could conclude that a general single-copy detection can be exponentially better than simply using a set of EWs. 

Here, we numerically examine the detection capabilities of several nonlinear criteria, like purity~\cite{GUHNE2009detection}, fisher information~\cite{Zhang_2020}, moments of partial transposed~\cite{yu2021optimal,neven2021symmetry} (labeled by $D_{3,\mathrm{opt}}$) and realigned density matrices~\cite{liu2022detecting} (labeled by $M_4$). We leave the description of these four criteria for the Appendix. These criteria all have single-copy realizations with resources independent of $k$. Therefore, from Fig.~\ref{fig:single-copy}, one could find that the detection capabilities of these four criteria decay exponentially with $k$ when $k$ is large, which is compatible with Theorem \ref{theorem:singlecopy}. 
\begin{figure}[htbp]
\centering
\includegraphics[width=8cm]{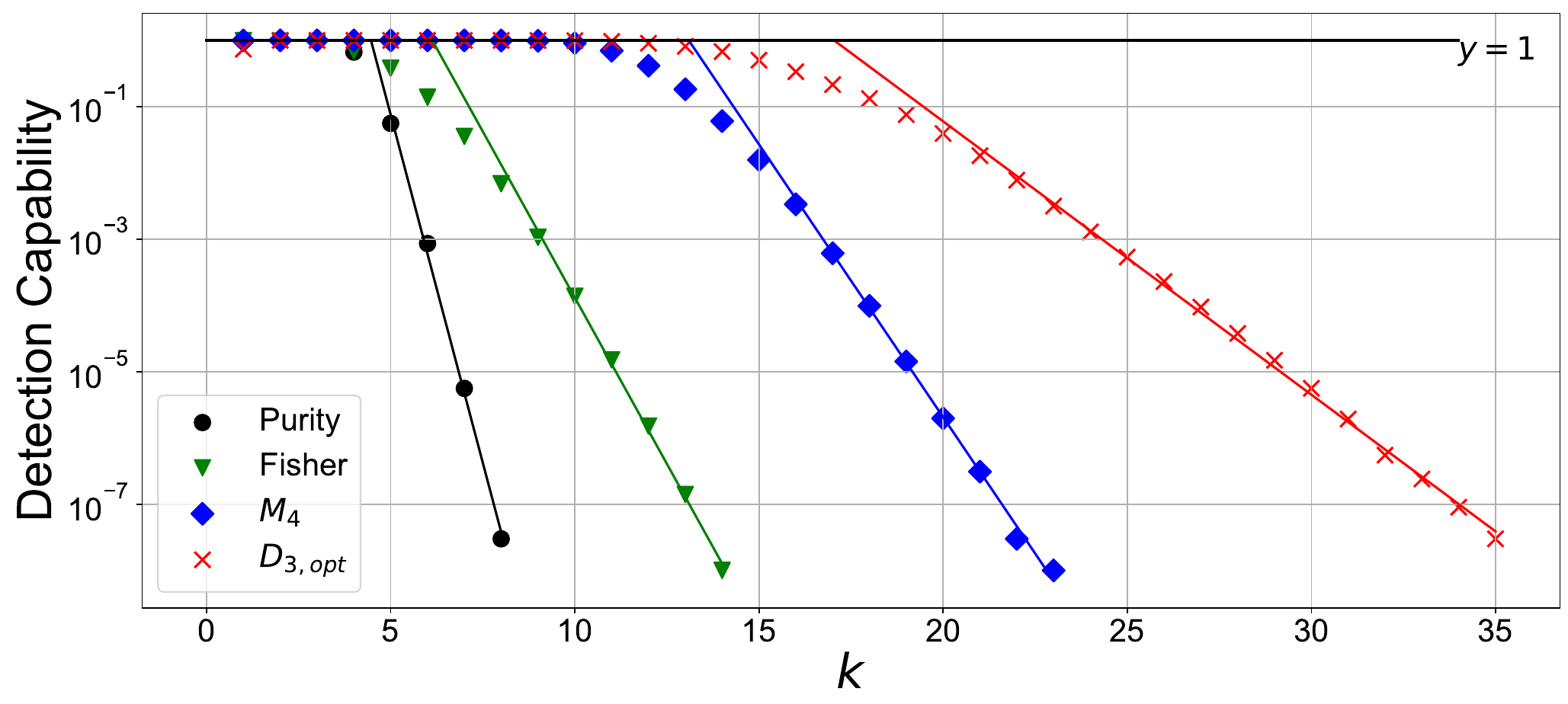}
    \caption{Detection capability of four nonlinear criteria. Here for each point, we generate $10^8$ states $\rho\in\mathcal{D}(\mathcal{H}_4\otimes\mathcal{H}_4)$ with distribution $\pi_{16,k}$ and calculate the detection capability of four different nonlinear entanglement criteria. The four inclined straight lines are linear regression results on the last several points with absolute slopes all larger than the ones predicted by Theorem~\ref{theorem:singlecopy}.}
    \label{fig:single-copy}
\end{figure}

Before the exponential decaying period, we also observe that the detection capabilities hold constant. In the Appendix, we analyze these thresholds in detail and numerically find that they all have polynomial relations with the system dimension $d$. Like for the $D_{3,\mathrm{opt}}$ criterion, the threshold is linearly dependent on $d$. These observations together with Theorem \ref{theorem:singlecopy} explain why the verification of these four criteria needs exponentially many  resources~\cite{brydges2019probing,rath2021fisher,zhou2020single,elben2020mixed,liu2022detecting}. From another point of view, if not restricted to single-copy operations, some of these criteria can be realized by only a few multicopy observables, implying a quantum advantage in entanglement detection tasks by joint operations~\cite{Huang2022}.

We can prove this advantage in some special cases. Let $d_A=d_B=k=\sqrt{d}$, the distributions of $\tr(\rho_A^2)$ and $\tr(\rho_{AB}^2)=\tr(\rho_R^2)$ are completely the same as systems $A$ and $R$ are symmetric. Hence, using the purity criterion, i.e. $\tr(\rho_{AB}^2)\le\tr(\rho_A^2) \ \forall \rho\in \mathrm{SEP}$, the detection capability is $0.5$ and the criterion can be verified using just one two-copy observable, $\tr(\rho_{AB}^2)-\tr(\rho_A^2)=\tr[(\mathbb{S}_{AB}-\mathbb{S}_A)\rho_{AB}^{\otimes 2}]$, where $\mathbb{S}$ is the SWAP operator. So we can summarize the results below.

\begin{corollary}[Quantum Advantage in Entanglement Detection]
Consider a state following $\pi_{d,\sqrt{d}}$ distribution, and $d_A=d_B=\sqrt{d}$. With only single-copy measurements, $M=\Omega(\frac{\sqrt{d}}{\ln d})$ observables are required for any criterion with detection capability greater than $0.5$. However, if multicopy joint measurements are allowed, one can detect with a capability equaling $0.5$ with only one two-copy observable.
\label{proposition:advantage}
\end{corollary}

Beyond Definition \ref{def:singlecopy}, adaptive methods could also be used to increase the efficiency of entanglement detection. In the Appendix, we give similar results as Theorem \ref{theorem:singlecopy} and Corollary \ref{proposition:advantage} for adaptive methods. It should be noticed that the quantum advantage in Corollary \ref{proposition:advantage} only holds in terms of the number of observables. While considering real-world experiments where multicopy measurements may require much more resources than single-copy ones, will the advantage still hold soundly? Besides, will Theorem \ref{theorem:singlecopy} holds when a small false-positive error rate is allowed? We will leave these questions to future work.

Meanwhile, our result also holds for some other typical state distributions. For example, we can show that Theorem~\ref{theorem:singlecopy} applies to random thermal states, which is widely used in quantum thermodynamics~\cite{vinjanampathy2016quantum}. In the Appendix, we present some numerical results demonstrating the exponential decay behavior of detection capabilities for random thermal states.

\begin{acknowledgments}
We thank Zhaohui Wei for the valuable discussions. This work was supported by the National Natural Science Foundation of China Grants No.~11875173 and No.~12174216 and the National Key Research and Development Program of China Grants No.~2019QY0702 and No.~2017YFA0303903. 
\end{acknowledgments}

\bibliographystyle{apsrev4-1}
%

\appendix

\onecolumngrid
\newpage

\clearpage
\setcounter{theorem}{1}

\section{Detection Capability Upper Bound of EW Criteria}\label{section: EW}
\subsection{Restriction of Valid EWs}
\begin{lemma}[Restriction of Valid EWs]
For any valid EW $W$ satisfying 
\begin{equation}
\forall \rho\in \mathrm{SEP}: \tr(W\rho)\geq 0,
\end{equation}
the following inequality always holds
\begin{equation}
    \alpha^2=\frac{\tr(W)^2}{\tr(W^2)}\geq 1.
\end{equation}
\end{lemma}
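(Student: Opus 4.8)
The plan is to exploit the fact that the maximally mixed state $\mathbb{I}/d$ sits strictly inside the separable set, so that the witness inequality cannot merely hold at $\mathbb{I}/d$ but must survive on an entire ball around it; optimizing the perturbation direction over that ball is what pins down $\alpha$. Since $\alpha^2$ is invariant under $W\mapsto cW$ for $c>0$, and equals $d\geq1$ trivially when $W\propto\mathbb{I}$, I may assume $W$ has a nonzero traceless part and decompose $W=\tfrac{\tr(W)}{d}\mathbb{I}+W_0$ with $\tr(W_0)=0$, so that $\norm{W_0}_F^2=\tr(W^2)-\tfrac{\tr(W)^2}{d}>0$. Evaluating the witness condition $\tr(W\rho)\geq0$ at $\rho=\mathbb{I}/d$ supplies the first ingredient, $\tr(W)\geq0$.

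The second, essential ingredient is the Gurvits--Barnum theorem~\cite{gurvits2002largest}: every density matrix within Frobenius distance $r:=(d(d-1))^{-1/2}$ of $\mathbb{I}/d$ is separable; moreover $r$ is exactly the radius of the largest ball around $\mathbb{I}/d$ contained in the state set, so any trace-one Hermitian operator at distance at most $r$ from $\mathbb{I}/d$ is automatically a valid state. I would feed this theorem the state that moves maximally against $W$,
\begin{equation}
\rho^\star=\frac{\mathbb{I}}{d}-r\,\frac{W_0}{\norm{W_0}_F},
\end{equation}
which has unit trace, lies on that sphere, and is therefore separable. Using $\tr(WW_0)=\tr(W_0^2)=\norm{W_0}_F^2$, the constraint $\tr(W\rho^\star)\geq0$ becomes
\begin{equation}
\frac{\tr(W)}{d}\ \geq\ r\,\norm{W_0}_F\ \geq\ 0 .
\end{equation}

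Squaring this (legitimate because both sides are nonnegative) and substituting $\norm{W_0}_F^2=\tr(W^2)-\tr(W)^2/d$ gives $\tr(W)^2\,(1+dr^2)\geq d^2r^2\,\tr(W^2)$, hence
\begin{equation}
\alpha^2=\frac{\tr(W)^2}{\tr(W^2)}\ \geq\ \frac{d^2r^2}{1+dr^2} .
\end{equation}
Plugging in the Gurvits--Barnum value $r^2=\tfrac{1}{d(d-1)}$, so that $dr^2=\tfrac1{d-1}$ and $d^2r^2=\tfrac d{d-1}$, the right-hand side collapses to exactly $1$, which proves $\alpha^2\geq1$. The only step carrying real weight is the Gurvits--Barnum radius: that is where the geometry of $\mathrm{SEP}$ genuinely enters, and it is what makes the constant sharp — everything afterward is the one-line optimization above (a Cauchy--Schwarz estimate on $\tr(W_0 X)$ over unit-norm traceless directions $X$). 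As a consistency check I would verify that equality $\alpha=1$ is saturated precisely by the PPT-type witnesses $\ketbra{\phi}^{\mathrm{T_A}}$, whose traceless parts point along the radial direction of the Gurvits--Barnum sphere.
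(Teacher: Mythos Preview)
Your proof is correct and follows essentially the same route as the paper's: decompose $W$ into its identity and traceless parts, invoke the Gurvits--Barnum separable ball around $\mathbb{I}/d$, and evaluate the witness condition on the separable state displaced maximally against the traceless direction. The paper normalizes to $\tr(W)=1$ and parametrizes the traceless part as $c\sigma/d$ with $\tr(\sigma^2)=d$ (using the purity form $\tr(\rho^2)\le\frac{1}{d-1}$ of Gurvits--Barnum), whereas you keep $\tr(W)$ general and use the equivalent Frobenius-radius form $r=(d(d-1))^{-1/2}$; these are cosmetic differences, and both arrive at the same test state and the same inequality.
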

\begin{proof}
Given an EW $W$, without loss of generality, we assume $\tr(W)=1$. Write $W$ in the form
\begin{equation}
    W=\frac{\mathbb{I}}{d}+\frac{c\sigma}{d}
\end{equation}
Where $\sigma$ is a hermitian operator satisfying $\tr(\sigma^2)=d$, $\tr(\sigma)=0$ and $c$ is a constant. We have
\begin{equation}
    \tr(W^2)=\tr(\frac{\mathbb{I}}{d^2})+\tr(\frac{c^2\sigma^2}{d^2})+2\tr(\frac{c\sigma}{d^2})=\frac{1+c^2}{d}.
\end{equation}
To show $\alpha^2=\frac{\tr(W)^2}{\tr(W^2)}\geq 1$, we are going to prove that $c^2\leq d-1$ by constructing a state:
\begin{equation}
    \rho_0=\frac{\mathbb{I}}{d}-\frac{\sigma}{\sqrt{d-1}d}.
\end{equation}
According to \cite{gurvits2002largest}, the set of separable states has a non-zero inner radius
\begin{equation}
    \forall \rho\in\mathcal{D}(\mathcal{H}_d), \tr(\rho^2)\leq \frac{1}{d-1}\to \rho \in \mathrm{SEP}.
\end{equation}
We can directly verify that
\begin{equation}
    \tr(\rho^2_0)=\frac{1}{d}+\frac{1}{d(d-1)}=\frac{1}{d-1},
\end{equation}
which means $\rho_0$ is not only a valid state but also separable. Since $W$ is an EW, $\tr(W\rho_0)\geq 0$.
\begin{equation}
\begin{split}
\tr(W\rho_0)&\geq 0\\
\frac{1}{d}-\frac{c}{d\sqrt{d-1}}&\geq 0\\
c^2&\leq d-1
\end{split}
\end{equation}
So we can conclude that for any valid entanglement witness $W$,
\begin{equation}
    \alpha^2=\frac{\tr(W)^2}{\tr(W^2)}=\frac{d}{1+c^2}\geq 1.
\end{equation}
\end{proof}

\begin{figure}[htbp]
    \centering
    \includegraphics[width=9cm]{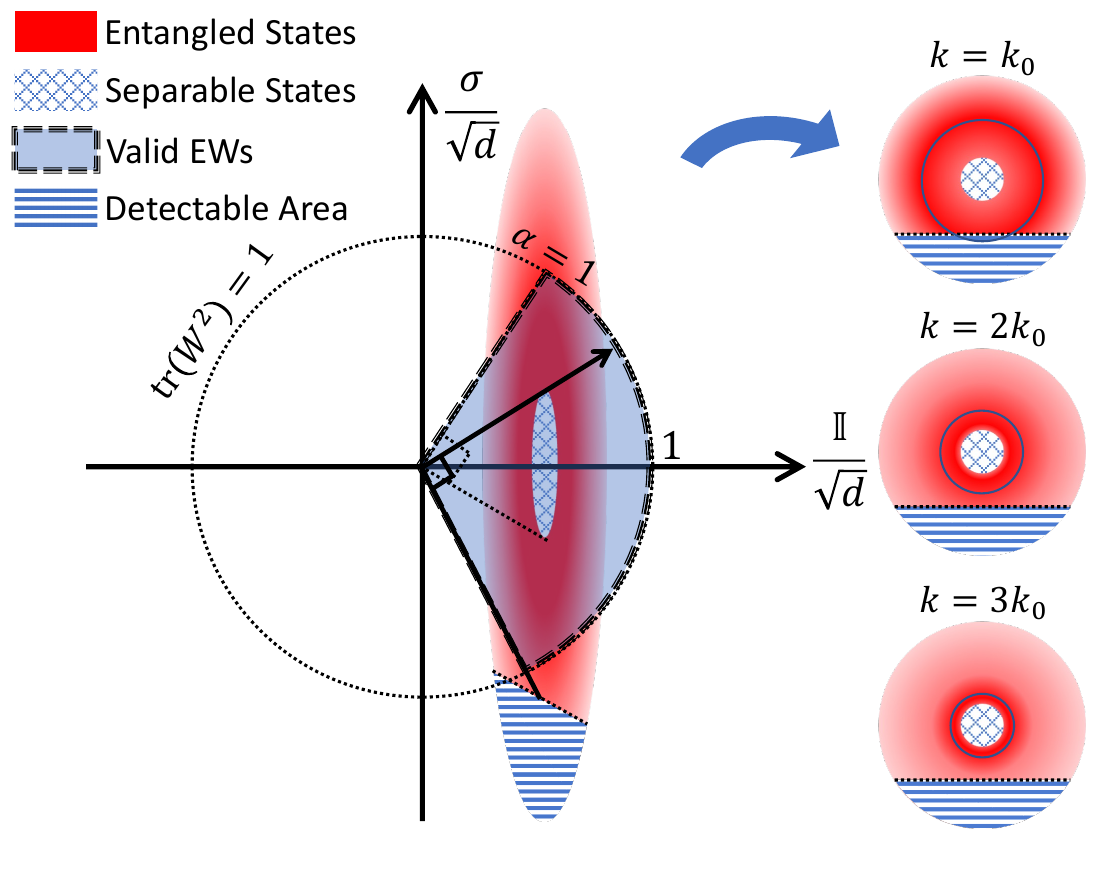}
    \caption{Graphical illustration of EW criteria. }
    \label{fig:geometry}
\end{figure}

We could also provide a graphical illustration of this lemma, which can help us understand the EW criteria. In Fig.~\ref{fig:geometry}, we use Pauli-Liouville representation to represent the density matrix and EWs as vectors in the operator space. The normalized identity $\frac{\mathbb{I}}{\sqrt{d}}$ is the $x$-axis, and the $y$-axis represents one of the other Pauli basis. The expectation of an observable can be calculated by the inner product of the state vector and the observable vector. Because of the trace condition $\tr(\rho)=1$, the density matrix lies in a hyperplane that is orthogonal to the $x$-axis. We use the solid and meshed area to represent the entangled and separable states. An EW can detect those states labeled by horizontal lines with obtuse angles with the EW. This observation and the fact that any state with a distance to the maximally mixed state less than a certain threshold is separable ensures that the angle between a valid EW and $y$-axis is larger than some constant. Quantitatively speaking, this tells us that $\alpha=\frac{\tr(W)}{\sqrt{\tr(W^2)}}$ has a minimum value. Without loss of generality, we assume all the EWs satisfy a normalization condition, $\tr(W^2)=\frac{1}{d}$. Thus all the EWs lie in a sphere centralized at the original point, represented by the dashed circle. Due to the $\alpha \geq 1$ ($\tr(W)\geq \frac{1}{\sqrt{d}}$) constraint, valid EWs are within dashed circular sector area. Given an EW, the states that can be detected lie in a fixed area in the space. When $k$ increases, the distribution of states, represented by the darkness of the color, will concentrate towards the maximally mixed state, making the ratio of detectable states decrease accordingly. The solid-line circles on the right represent the boundary of the typical set.

\subsection{Proof of Theorem 2}
\begin{theorem}[Detection Capability of EW Criteria]\label{theorem:singleEW}
The detection capability of an EW criterion with $W$ decays at least exponentially with the dimension of the environment
\begin{equation}
    \mathcal{C}_k(W)< 2e^{-(\sqrt{1+\alpha}-1)^2 k}\leq 2e^{-(3-2\sqrt{2}) k},
\end{equation}
where $\alpha=\frac{\tr(W)}{\sqrt{\tr(W^2)}}\geq 1$ is a witness-dependent factor.
\end{theorem}

\begin{proof}
We first generalize the definition of detection capability for EW to any observable $O$ with a positive trace and prove it with this generalized definition. Similarly, define
\begin{equation}
    \mathcal{C}_k(O)=\Pr_{\rho\sim\pi_{d,k}}\bqty{\tr(O\rho)<0}
\end{equation}
as the detection capability of an observable $O$.

Since $O$ can be decomposed as $O=U_O\Lambda_OU_O^\dagger$, where $U_O$ is unitary, and $\Lambda_O$ are the eigenvalues of $O$, we can equivalently rewrite
\begin{equation}
    \mathcal{C}_k(O)=\Pr_{\rho\sim\pi_{d,k}}\bqty{\tr(U_O\Lambda_OU_O^\dagger \rho)<0}=\Pr_{\rho\sim\pi_{d,k}}\bqty{\tr(\Lambda_OU_O^\dagger \rho U_O)<0}.
\end{equation} 
According to the definition of Haar measure, if $\rho$ follows the distribution of $\pi_{d,k}$, $U_O^\dagger\rho U_O$ also follows the distribution of $\pi_{d,k}$ as $U_O$ is a fix unitary \cite{Nechita2007}. Therefore, \begin{equation}
    \mathcal{C}_k(O)=\mathcal{C}_k(\Lambda_O)
\end{equation} only depends on the eigenvalues of $O$.

To analyze $\mathcal{C}_k(O)$, we need to write down the distribution of $\rho$ explicitly. According to the definition of $\pi_{d,k}$, $\rho\in\mathcal{D}(\mathcal{H})$ can be written as the reduced density matrix in a larger Hilbert space, $\rho=\tr_R(\ketbra{\Psi})$, where $\ket\Psi$ is a random state in $\mathcal{H}\otimes \mathcal{H}_R$. The distribution of $\ket\Psi$ can be generated by random Gaussian variables:
\begin{equation}
\ket\Psi=\sum_{i=1}^d\sum_{j=1}^k \frac{z_{i,j}}{\sqrt{\tr(ZZ^\dagger)}}\ket{\phi_i}\ket{\psi_j},
\end{equation}
where $z_{i,j}$ is the element of the random complex Gaussian matrix, $Z$ \cite{Nechita2007}; $\{\ket{\phi_i}\}$ and $\{\ket{\psi_j}\}$ form orthonormal bases for $\mathcal{H}$ and $\mathcal{H}_R$ respectively. Precisely speaking,
\begin{equation}\label{eq:distribution}
    x_{i,j}=\Re(z_{i,j})\sim N(0, 1), y_{i,j}=\Im(z_{i,j})\sim N(0, 1),
\end{equation}
are all standard Gaussian variables. Hence,
\begin{equation}
\rho=\tr_R(\ketbra{\Psi})=\sum_{i,j=1}^d \frac{\sum_{l=1}^k z_{i,l}z^*_{j,l}}{\tr(ZZ^\dagger)}\ket{\phi_i}\bra{\phi_j}.
\end{equation}
Therefore, the detection capability can be written as
\begin{equation}\label{eq:probGaussian}
\begin{split}
\Pr_{\rho\sim\pi_{d,k}}\bqty{\tr(\Lambda_O\rho)<0}=&\Pr_{x_{i,j},y_{i,j}\sim N(0, 1)}\bqty{\pqty{\frac{1}{\tr(ZZ^\dagger)}}\sum_{i=1}^{d} \sum_{j=1}^k\lambda_i z_{i,j}z_{i,j}^*<0}\\
=&\Pr_{x_{i,j},y_{i,j}\sim N(0, 1)}\bqty{\sum_{i=1}^{d} \sum_{j=1}^k\lambda_i(x_{i,j}^2+y_{i,j}^2)<0}.
\end{split}
\end{equation}
We label the positive and negative eigenvalues of $O$ as $a_1, \cdots, a_p$ and $-b_1\cdots -b_q$ with $a_i, b_j>0$ and $p+q\le d$. So we can rewrite Eq.~\eqref{eq:probGaussian} as
\begin{equation}
 \Pr_{x_{i,j},y_{i,j}\sim N(0, 1)}\bqty{\sum_{i=1}^{d} \sum_{j=1}^k\lambda_i(x_{i,j}^2+y_{i,j}^2)<0}=\Pr_{x_{i,j},y_{i,j}\sim N(0, 1)}\bqty{\sum_{j=1}^{2k}\pqty{\sum_{i=1}^p a_ix_{i,j}^2-\sum_{i=1}^q b_i y_{i,j}^2}<0},\label{Proof of theorem 1:step3}
\end{equation}
where the $x$ and $y$ in the left-hand and right-hand sides are not the same variables. We relabel them to make the representation clearer while keeping them independent variables. 

For simplicity, we define $\vb* u=(a_1,\dots,a_p,\dots,a_1,\dots,a_p)$ and $\vb* v=(b_1,\dots,b_q,\dots,b_1,\dots,b_q)$ which are the $2k$ replica of $\vb* a=(a_1,\dots,a_p)$ and $\vb* b=(b_1,\dots,b_q)$ respectively. Accordingly, 
\begin{equation}
\Pr_{x_{i,j},y_{i,j}\sim N(0, 1)}\bqty{\sum_{j=1}^{2k}\pqty{\sum_{i=1}^p a_ix_{i,j}^2-\sum_{i=1}^q b_i y_{i,j}^2}<0}=\Pr_{x_i,y_i\sim N(0,1)}\left(\sum_{i=1}^{2kp}u_ix_i^2-\sum_{i=1}^{2kq}v_iy_i^2<0\right).
\end{equation}
Using union bound, we can prove that for any real number $c$
\begin{equation}
\begin{split}
\Pr_{x_i,y_i\sim N(0,1)}\left(\sum_{i=1}^{2kp}u_ix_i^2-\sum_{i=1}^{2kq}v_iy_i^2<0\right)&\le\Pr_{x_i,y_i\sim N(0,1)}\left(\left(\sum_{i=1}^{2kp}u_ix_i^2\le c\right)\cup\left(\sum_{i=1}^{2kq}v_iy_i^2\ge c\right)\right)\\
&\le\Pr_{x_i\sim N(0,1)}\left(\sum_{i=1}^{2kp}u_ix_i^2\le c\right)+\Pr_{y_i\sim N(0,1)}\left(\sum_{i=1}^{2kq}v_iy_i^2\ge c\right).
\end{split}
\end{equation}
To bound this probability, we adopt the Laurent-Massart's lemma \cite{laurent2000adaptive}, which states that for non-negative vectors $\vb* u$ and $\vb* v$ and i.i.d. variables $\{x_i\sim N(0,1)\}$, the following two inequalities hold for all positive numbers $t_1$ and $t_2$:
\begin{equation}
\begin{split}
&\Pr_{x_i\sim N(0,1)}\left(\sum_i u_i x_i^2\leq \norm{\vb* u}_1-2\norm{\vb* u}_2\sqrt {t_2}\right)\leq e^{-t_2}\\
&\Pr_{y_i\sim N(0,1)}\left(\sum_i v_i y_i^2\geq \norm{\vb* v}_1+2\norm{\vb* v}_2\sqrt {t_1}+2\norm{\vb* v}_\infty t_1\right)\leq e^{-t_1}
\end{split}
\end{equation}
where $\norm{\vb* v}_1=\sum_i\abs{v_i}$ ,$\norm{\vb* v}_2=\sqrt{\sum_iv_i^2}$ and $\norm{\vb* v}_\infty=\max_i\abs{v_i}$. Hence, if 
\begin{equation}\label{eq:t1t2relation}
\begin{split}
&\norm{\vb* u}_1-2\norm{\vb* u}_2\sqrt{t_2}=2k\norm{\vb* a}_1-2\sqrt{2k}\norm{\vb* a}_2\sqrt{t_2}=c\\
&\norm{\vb* v}_1+2\norm{\vb* v}_2\sqrt{t_1}+2\norm{\vb* v}_\infty t_1=2k\norm{\vb* b}_1+2\sqrt{2k}\norm{\vb* b}_2\sqrt{t_1}+2\norm{\vb* b}_\infty t_1=c
\end{split}
\end{equation}
hold, then the probability can be upper bounded by
\begin{equation}\label{eq:upperbound12}
\Pr_{x_i,y_i\sim N(0,1)}\left(\sum_{i=1}^{2kp}u_ix_i^2-\sum_{i=1}^{2kq}v_iy_i^2<0\right)\le e^{-t_1}+e^{-t_2}.
\end{equation}

To find a $c$ that gives the tightest bound, one should notice that according to Eq.~\eqref{eq:upperbound12}, the upper bound is determined by the minimal one of $t_1$ and $t_2$. Besides, Eq.~\eqref{eq:t1t2relation} tells us that the values of $t_1$ and $t_2$ are inversely related. Therefore, the tightest upper bound is reached when $t_1=t_2=t$, which gives the exact value of $t$:
\begin{equation}
 \sqrt {\frac{t}{2k}}=\frac{-(\norm{\vb* a}_2+\norm{\vb* b}_2)+\sqrt{{(\norm{\vb* a}_2+\norm{\vb* b}_2)}^2+2\norm{\vb* b}_\infty \tr(O)}}{2\norm{\vb* b}_\infty}.
\end{equation}
To further simplify this equation, let $\alpha=\frac{\tr(O)}{\sqrt{\tr(O^2)}}$, then we have
\begin{equation}
\begin{split}
 \sqrt {\frac{t}{2k}}&=\frac{-(\norm{\vb* a}_2+\norm{\vb* b}_2)+\sqrt{{(\norm{\vb* a}_2+\norm{\vb* b}_2)}^2+2\norm{\vb* b}_\infty \tr(O)}}{2\norm{\vb* b}_\infty}\\&\geq
 \frac{-\sqrt{2\tr(O^2)}+\sqrt{2\tr(O^2)+2\norm{\vb* b}_\infty \tr(O)}}{2\norm{\vb* b}_{\infty}}\\&=
 \frac{-\alpha^{-1}\tr(O)+\sqrt{\alpha^{-2}{\tr(O)}^2+\norm{\vb* b}_\infty \tr(O)}}{\sqrt{2}\norm{\vb* b}_\infty}.
 \end{split}
\end{equation}
The first inequality uses the fact that $f(x)=-x+\sqrt{1+x^2}$ is monotone and $\tr(O^2)=\norm{\vb* a}_2^2+\norm{\vb* b}_2^2\geq \frac{{(\norm{\vb* a}_2+\norm{\vb* b}_2)}^2}{2}$. Define
\begin{equation}
    x=\frac{\alpha\norm{\vb* b}_\infty}{\tr(O)}=\frac{\norm{\vb* b}_\infty}{\sqrt{\tr(O^2)}}\geq0.
\end{equation} 
By definition, $\norm{\vb* b}_\infty=\max_i|b_i|<\sqrt{\tr(O^2)}$, it is easy to prove that $0\le x\le \sqrt{1-\frac{1}{d}}< 1$. Therefore,
\begin{equation}
    \sqrt{\frac{t}{k}}= \frac{-1+\sqrt{1+\alpha x}}{x}>\sqrt{1+\alpha}-1,
\end{equation}
where we use the fact that the function $\frac{-1+\sqrt{1+\alpha x}}{x}$ is monotonically decreasing with $x$. Combined with Eq.~\eqref{eq:upperbound12}, we have
\begin{equation}
 \mathcal{C}_k(O) < 2e^{-{(\sqrt{1+\alpha}-1)}^2k}.
\end{equation}
\end{proof}

\section{Detection Capability Upper Bound of Parameterized EW Criteria}\label{sec:Para EW}
\subsection{Proof of Theorem 3}
\begin{theorem}[Detection Capability of Parameterized EW Criteria]
For any parameterized EW represented by a normalized $l$-Lipschitz map $\mathcal{M}$ satisfying
\begin{equation}
\forall \vb*\theta,\vb*\theta'\in \Theta:\norm{\mathcal{M}(\vb*\theta)-\mathcal{M}(\vb*\theta')}_F\leq l\norm{\vb*\theta-\vb*\theta'}_2,
\end{equation}
The detection capability decays at least exponentially with  $k$ after $k$ exceeds a certain threshold
\begin{equation}
    \mathcal{C}^p_k(\mathcal{M})\leq 2e^{C_1-C_2k},
\end{equation}
where $C_1=M\ln \frac{2\sqrt{M}ld}{\epsilon}$, $M$ is the number of real parameters in $\mathcal{M}$, $C_2=(\sqrt{1+\alpha_{\min}-\epsilon}-1)^2$ where $\alpha_{\min}=\min_{\vb*\theta} \frac{\tr[\mathcal{M}(\vb*\theta)]}{\sqrt{\tr[\mathcal{M}(\vb*\theta)^2]}}=\min_{\vb*\theta} \tr[\mathcal{M}(\vb*\theta)]\geq 1$, and $0< \epsilon<1$ is an arbitrary number. By choosing $\epsilon=0.5$, we have the original theorem in the main text.
\label{theorem:paraEWsinapp}
\end{theorem}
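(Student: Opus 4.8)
\emph{Proof strategy (coarse-graining).} The plan is to approximate the continuous family $\{\mathcal M(\vb*\theta):\vb*\theta\in\Theta\}$ by a finite subfamily indexed by an $\ell_2$-net of the parameter cube, use Lipschitz continuity to relax the detection condition $\tr(\rho\,\mathcal M(\vb*\theta))<0$ at a true parameter to a slightly weaker condition at the nearest net parameter, rewrite that weaker condition as $\tr(\rho O)<0$ for an auxiliary observable $O$ of positive trace, and then apply the generalized Theorem~\ref{theorem:singleEW} (which holds for \emph{any} observable of positive trace, not only EWs) to $O$, closing with a union bound over the net.

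Concretely, fix $\epsilon'=\epsilon/(ld)$ and let $\mathcal N$ be the regular grid on $[-1,1]^M$ with per-coordinate spacing $2\epsilon'/\sqrt M$, so that every parameter in $[-1,1]^M$ lies within Euclidean distance $\epsilon'$ of some $\vb*\theta_j\in\mathcal N$ and $|\mathcal N|\le(2\sqrt M/\epsilon')^M=(2\sqrt M\,ld/\epsilon)^M=\mathrm{e}^{C_1}$. If $\rho$ is detected, there is $\vb*\theta\in\Theta$ with $\tr(\rho\,\mathcal M(\vb*\theta))<0$ and a net point $\vb*\theta_j$ with $\norm{\vb*\theta-\vb*\theta_j}_2\le\epsilon'$; the Cauchy--Schwarz inequality for the Hilbert--Schmidt inner product, the $l$-Lipschitz bound, and $\norm{\rho}_F=\sqrt{\tr(\rho^2)}\le 1$ then give
\begin{equation}
\tr(\rho\,\mathcal M(\vb*\theta_j))=\tr(\rho\,\mathcal M(\vb*\theta))+\tr(\rho[\mathcal M(\vb*\theta_j)-\mathcal M(\vb*\theta)])<\norm{\rho}_F\,\norm{\mathcal M(\vb*\theta_j)-\mathcal M(\vb*\theta)}_F\le l\epsilon'=\frac{\epsilon}{d}.
\end{equation}
Hence, by the union bound, $\mathcal C^p_k(\mathcal M)\le\sum_{\vb*\theta_j\in\mathcal N}\Pr_{\rho\sim\pi_{d,k}}[\tr(\rho\,\mathcal M(\vb*\theta_j))<\epsilon/d]$.

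For the shift, set $W_j=\mathcal M(\vb*\theta_j)$ and $O_j=W_j-(\epsilon/d)\mathbb{I}$, so that $\tr(\rho W_j)<\epsilon/d$ is exactly $\tr(\rho O_j)<0$. Using $\norm{W_j}_F=1$ together with $\tr(W_j)\ge\alpha_{\min}\ge 1>\epsilon$, one finds
\begin{equation}
\tr(O_j)=\tr(W_j)-\epsilon\ge\alpha_{\min}-\epsilon>0,\qquad \tr(O_j^2)=1-\frac{\epsilon}{d}\bigl(2\tr(W_j)-\epsilon\bigr)\le 1,
\end{equation}
so $O_j$ has positive trace and $\tr(O_j)/\sqrt{\tr(O_j^2)}\ge\alpha_{\min}-\epsilon$. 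Theorem~\ref{theorem:singleEW}, in its observable form, then gives $\Pr_{\rho\sim\pi_{d,k}}[\tr(\rho O_j)<0]<2\mathrm{e}^{-(\sqrt{1+\tr(O_j)/\sqrt{\tr(O_j^2)}}-1)^2k}$, which by monotonicity of $x\mapsto(\sqrt{1+x}-1)^2$ is at most $2\mathrm{e}^{-(\sqrt{1+\alpha_{\min}-\epsilon}-1)^2k}=2\mathrm{e}^{-C_2k}$. Summing over the $|\mathcal N|\le\mathrm{e}^{C_1}$ net points gives $\mathcal C^p_k(\mathcal M)<2\mathrm{e}^{C_1-C_2k}$; taking $\epsilon=\tfrac{1}{2}$ recovers the main-text version.

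I expect the one genuinely delicate point to be the calibration of the mesh $\epsilon'$. If $\epsilon'$ is too large, the forced threshold $l\epsilon'$ eats too much of $\alpha_{\min}$: in $O_j=W_j-l\epsilon'\mathbb{I}$ the trace drops by about $l\epsilon' d$ and $\tr(O_j^2)$ acquires a positive correction, both degrading $C_2$. If $\epsilon'$ is too small, $|\mathcal N|$ blows up and $C_1$ degrades. The choice $\epsilon'=\epsilon/(ld)$ is exactly the one that makes the shift equal to $\epsilon/d$, so that the trace loss is the clean linear $\epsilon$ and the cross-term $-(\epsilon/d)(2\tr(W_j)-\epsilon)$ in $\tr(O_j^2)$ is nonpositive with no extra bookkeeping. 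A secondary technicality is that the net should be taken where $\mathcal M$ is defined, Frobenius-normalized and $l$-Lipschitz; under the natural reading of Definition~\ref{def:paraEW} this holds on (a neighborhood in) $[-1,1]^M$, and otherwise one selects representatives inside $\Theta$ at the cost of an absolute constant in $C_1$. Everything else — the two norm estimates and the trace identity for $O_j$ — is routine.
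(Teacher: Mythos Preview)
Your proposal is correct and follows essentially the same coarse-graining strategy as the paper: build an $\epsilon'$-net of the parameter cube of size $(2\sqrt{M}ld/\epsilon)^M$, shift each net witness by $-(l\epsilon')\mathbb{I}=-(\epsilon/d)\mathbb{I}$ to an observable $O_j$ with $\tr(O_j)/\sqrt{\tr(O_j^2)}\ge\alpha_{\min}-\epsilon$, apply the observable form of Theorem~\ref{theorem:singleEW}, and finish with a union bound. The only cosmetic difference is that you control the perturbation $\tr(\rho[\mathcal M(\vb*\theta_j)-\mathcal M(\vb*\theta)])$ via Cauchy--Schwarz and $\norm{\rho}_F\le 1$, whereas the paper uses the operator inequality $\mathcal M(\vb*\theta)\ge\mathcal M(\vb*\theta^*)-l\delta\,\mathbb{I}$ (from $\norm{\cdot}_{\mathrm{op}}\le\norm{\cdot}_F$) together with $\tr(\rho)=1$; both routes produce the identical shift $l\epsilon'=\epsilon/d$ and hence the same constants $C_1,C_2$. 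Your flagging of the net-points-in-$\Theta$ technicality is apt and is in fact handled no more carefully in the paper itself.
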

We prove this theorem using a coarse-graining method. The proof sketch is shown in Fig.~\ref{fig:proofoftheo2}.
\begin{figure}[htbp]
\centering
\includegraphics[width=7cm]{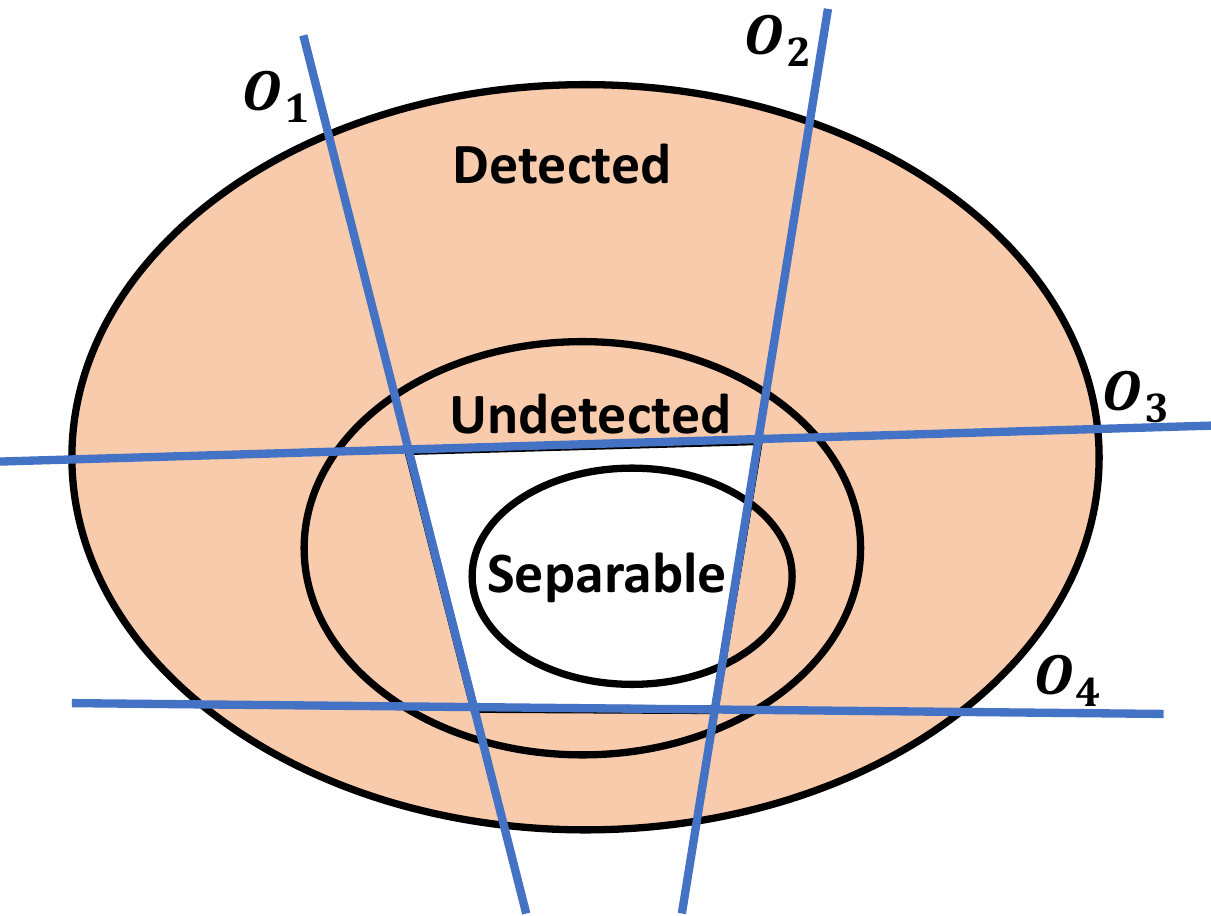}
\caption{Proof Sketch of Theorem \ref{theorem:paraEWsinapp}. We prove that if $\mathcal{M}$ is $l$-Lipschitz, the set of detected states can be covered by a finite set of different observables, like the colored area in this figure. Each observable is close to but not necessarily an EW. Therefore, by bounding the total number of observables and the difference between these observables and valid EWs, the volume of the detected set can be upper bounded by Theorem \ref{theorem:singleEW}.}
\label{fig:proofoftheo2}
\end{figure}

\begin{proof}
A parameterized EW $\mathcal{M}$ is a map that maps $M$ real parameters to a continuous set of  EWs:
\begin{equation}
    \forall \vb* \theta\in\Theta\subset[-1,1]^M,\rho\in \mathrm{SEP}:\tr(\rho \mathcal{M}(\vb* \theta))\geq 0.
\end{equation}
We are going to bound the detection capability of a parameterized EW 
\begin{equation}
\mathcal{C}^p_k(\mathcal{M})=\Pr_{\rho\sim\pi_{d,k}}\bqty{\exists \vb*\theta\in\Theta: \tr(\rho \mathcal{M}(\vb* \theta))<0}
\end{equation}
by constructing a finite set of observables $\mathcal{O}=\Bqty{O_i|i=1,\cdots, N}$ (not necessarily EWs), such that all the entangled states $\rho$ that can be detected by $\mathcal{M}$ can also be detected by $\mathcal{O}$,
\begin{equation}
        \forall \rho : \exists \vb*\theta\in\Theta, \tr\left(\rho \mathcal{M}(\vb* \theta)\right)<0\to \exists O_i\in \mathcal{O},\tr\pqty{\rho O_i}<0.
\end{equation}
Once we find the observable set $\mathcal{O}$, the detection capability of $\mathcal{M}$ is bounded by the detection capability of $\mathcal{O}$,
\begin{equation}
    \mathcal{C}^p_k(\mathcal{M})\leq \mathcal{C}_k(\mathcal{O}) = \Pr_{\rho\sim\pi_{d,k}}\left[\exists O\in\mathcal{O},\tr(O\rho)<0\right].
\end{equation}

Firstly, we coarse-grain the parameter space, define $\Theta^*=\Bqty{\vb* \theta_i\in\Theta,i=1,\cdots, N}$, such that
\begin{equation}\label{eq:coarsegraincondition}
    \forall \vb*\theta \in \Theta, \exists \vb* \theta^*\in \Theta^*:\norm{\vb* \theta-\vb* \theta^*}_2\leq \delta.
\end{equation}
Since $\mathcal{M}$ is $l$-Lipschitz, we have
\begin{equation}
    \forall\vb*\theta \in \Theta, \exists \vb* \theta^*\in\Theta^*:\norm{\mathcal{M}(\vb* \theta)-\mathcal{M}(\vb* \theta^*)}_F\leq l\delta,
\end{equation}
which means that
\begin{equation}
    \forall \vb*\theta \in \Theta, \exists \vb* \theta^*\in\Theta^*: \mathcal{M}(\vb* \theta)-(\mathcal{M} (\vb* \theta^*)-l\delta  \mathbb{I}) \geq 0.
\end{equation}
Hence, for any state $\rho$ satisfying $\tr\left(\rho\mathcal{M}(\vb* \theta)\right)<0$, it also holds that
\begin{equation}
    \exists\vb*\theta^*\in\Theta^*:\tr\left[\left(\mathcal{M}(\vb* \theta^*)-l\delta \mathbb{I}\right)\rho\right]<0.
\end{equation}
Therefore, we can choose $\mathcal{O}$ to be $\mathcal{O}=\Bqty{\mathcal{M}(\vb* \theta_i)-l\delta\mathbb{I},\vb* \theta_i\in\Theta^*}$, whose detection capability can also be bounded using Theorem \ref{theorem:singleEW}.

To bound $\mathcal{C}_k(\mathcal{O})$, we need to figure out two problems: what is the detection capability of a single $O_i$ in $\mathcal{O}$ and what is the number of elements in $\mathcal{O}$. According to Theorem \ref{theorem:singleEW}, the key quantity to bound $\mathcal{C}_k\left(\mathcal{M}(\vb* \theta^*)-l\delta\mathbb{I}\right)$ is
\begin{equation}
\begin{split}
    \alpha^2=\frac{\tr(\mathcal{M}(\vb* \theta^*)-l\delta  \mathbb{I})^2}{\tr((\mathcal{M}(\vb* \theta^*)-l\delta  \mathbb{I})^2)}=\frac{(\alpha^*-l\delta d)^2}{1-2l\delta\alpha^*+(l\delta)^2d},\label{proof of theorem 2:step1}
\end{split}
\end{equation}
where $\alpha^*=\frac{\tr(\mathcal{M}(\vb* \theta^*))}{\sqrt{\tr(\mathcal{M}(\vb* \theta^*)^2)}}=\tr(\mathcal{M}(\vb* \theta^*)) \geq 1$. It can also be directly verified by norm inequality, $\alpha^*\leq \sqrt{d}$. Define $0< l\delta d= \epsilon< 1$, we have
\begin{equation}
    1-2l\delta\alpha^*+(l\delta)^2d=1-2\frac{\epsilon\alpha^*}{d}+\frac{\epsilon^2}{d}>0
\end{equation}
and
\begin{equation}
    1-2\frac{\epsilon\alpha^*}{d}+\frac{\epsilon^2}{d}\leq 1
\end{equation}
Combine these inequalities with Eq.~\eqref{proof of theorem 2:step1}, and we get $\alpha\geq \alpha^*-\epsilon$. Thus the detection capability of a single observable in $\mathcal{O}$ can be bounded by
\begin{equation}\label{proof of theorem 2:step2}
\mathcal{C}_k\left(\mathcal{M}(\vb* \theta^*)-l\delta\mathbb{I}\right)< 2e^{-(\sqrt{1+\alpha_{\mathrm{min}}-\epsilon}-1)^2k},
\end{equation}
where $\alpha_{\min}=\min_{\vb*\theta\in \Theta} \frac{\tr(\mathcal{M}(\vb* \theta))}{\sqrt{\tr(\mathcal{M}(\vb* \theta)^2)}}\geq 1$.

To find the number of elements in $\mathcal{O}$, we can divide the parameter space into small cubes with side length $\frac{\delta}{\sqrt{M}}$. In each cube, there exists a $\vb*\theta_i$, such that for all the $\vb*\theta$ contained in this cube, $\norm{\vb*\theta-\vb*\theta_i}_2\le\sqrt{M\left(\frac{\delta}{\sqrt{M}}\right)^2}=\delta$, which fulfills the condition of Eq.~\eqref{eq:coarsegraincondition}. As the volume of parameter space is upper bounded by $2^M$, the number of cubes, which is also the upper bound of the number of elements in $\mathcal{O}$, is 
\begin{equation}\label{eq:numberofobs}
\abs{\mathcal{O}}=\left(\frac{2\sqrt{M}}{\delta}\right)^M=\left(\frac{2\sqrt{M}ld}{\epsilon}\right)^M.
\end{equation}
Combining Eq.~\eqref{proof of theorem 2:step2} and Eq.~\eqref{eq:numberofobs}, we can finish the proof by
\begin{equation}
\mathcal{C}^p_k(\mathcal{M})\le\mathcal{C}_k(\mathcal{O})<2 e^{M\ln \frac{2\sqrt{M}ld}{\epsilon}-(\sqrt{1+\alpha_{min}-\epsilon}-1)^2k}.
\end{equation}
\end{proof}

\subsection{Examples: Positive Map and Faithful Entanglement Criteria}

A bipartite state $\rho\in\mathcal{D}(\mathcal{H}_A\otimes\mathcal{H}_B)$ can be detected by a positive map $\mathcal{N}$ if and only if there exists a parameterized EW $\mathcal{M}_{\mathcal{N}}\pqty{\vb*\theta}=\frac{\mathcal{N}_A\otimes\mathbb{I}_B(\ketbra{\phi(\vb*\theta)})}{\norm{\mathcal{N}_A\otimes\mathbb{I}_B(\ketbra{\phi(\vb*\theta)})}_F}$ to detect it. This is equivalent to
\begin{equation}
    \exists\vb*\theta\in S^{2d-1}: \tr\bqty{\mathcal{N}_A\otimes\mathbb{I}_B\left(\ketbra{\phi(\vb*\theta)}\right)\rho}<0,
\end{equation}
where $S^{2d-1}$ is the unit sphere in the $2d$-dimensional parameter space, and $\bra{j}\ket{\phi(\vb* \theta)}=\theta_{2j}+i\theta_{2j+1}$. Hence, substituting $M$ with $2d$, we have:
\begin{corollary}[Detection Capability of Positive Maps]\label{coro:PNCP}
A normalized $l$-Lipschitz positive map $\mathcal{N}$ has detection capability:
\begin{equation}
\mathcal{C}^p_k(\mathcal{M}_{\mathcal{N}})< 2e^{C_1-C_2k}
\end{equation}
where $C_1=2d\ln \bqty{2^{2.5}d^{1.5}l}$, $C_2=(\sqrt{0.5+\alpha_{\min}}-1)^2$, $\alpha_{\mathrm{min}}=\min_{\vb*\theta}\frac{\tr[\mathcal{N}_A\otimes  \mathbb{I}_B(\ketbra{\phi(\vb*\theta)})]}{\sqrt{\tr[\mathcal{N}_A\otimes  \mathbb{I}_B(\ketbra{\phi(\vb*\theta)})^2]}}$.
\end{corollary}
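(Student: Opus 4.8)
The plan is to recognize Corollary \ref{coro:PNCP} as a direct specialization of Theorem \ref{theorem:paraEWsinapp} once the positive-map criterion has been recast in the parameterized-EW language. First I would invoke the standard dual characterization of positive-map detection: a bipartite state $\rho$ fails the test $\mathcal{N}_A\otimes\mathbb{I}_B(\rho)\geq0$ exactly when there is a unit vector $\ket{\phi}$ with $\bra{\phi}\mathcal{N}_A\otimes\mathbb{I}_B(\rho)\ket{\phi}<0$. Trace duality rewrites this as $\tr\!\big[\rho\,(\mathcal{N}^{\dagger}_A\otimes\mathbb{I}_B)(\ketbra{\phi})\big]<0$, and because $\mathcal{N}^{\dagger}$ is again a positive map it suffices (after relabeling, since the statement quantifies over all positive maps) to take $\mathcal{M}_{\mathcal{N}}(\vb*\theta)$ to be the Frobenius-normalized operator built from $\mathcal{N}_A\otimes\mathbb{I}_B(\ketbra{\phi(\vb*\theta)})$. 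I would then check that each $\mathcal{M}_{\mathcal{N}}(\vb*\theta)$ is a genuine normalized EW: on a product state $\sigma_A\otimes\sigma_B$ one has $\tr[\mathcal{N}_A\otimes\mathbb{I}_B(\ketbra{\phi(\vb*\theta)})(\sigma_A\otimes\sigma_B)]=\tr[\ketbra{\phi(\vb*\theta)}(\mathcal{N}^{\dagger}_A(\sigma_A)\otimes\sigma_B)]\geq0$ since $\mathcal{N}^{\dagger}_A(\sigma_A)\geq0$, and convexity extends the inequality to all of $\mathrm{SEP}$, while $\norm{\mathcal{M}_{\mathcal{N}}(\vb*\theta)}_F=1$ holds by construction.

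Next I would count parameters. A unit vector in the $d$-dimensional space $\mathcal{H}_A\otimes\mathcal{H}_B$ is specified by $2d$ real numbers via $\braket{j}{\phi(\vb*\theta)}=\theta_{2j}+i\theta_{2j+1}$, so the feasible parameter set is $\Theta=S^{2d-1}\subset[-1,1]^{2d}$. Thus the positive-map criterion matches Definition \ref{def:paraEW} with $M=2d$, and the assumption that $\mathcal{N}$ is ``normalized $l$-Lipschitz'' is exactly the hypothesis of Theorem \ref{theorem:paraEWsinapp}, namely that $\vb*\theta\mapsto\mathcal{M}_{\mathcal{N}}(\vb*\theta)$ is $l$-Lipschitz from $(\Theta,\norm{\cdot}_2)$ into normalized operators with Frobenius distance; no further estimate is needed at this point.

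It then remains to substitute $M=2d$ and $\epsilon=0.5$ into Theorem \ref{theorem:paraEWsinapp}. This yields $C_2=(\sqrt{1+\alpha_{\min}-0.5}-1)^2=(\sqrt{0.5+\alpha_{\min}}-1)^2$ and $C_1=M\ln\frac{2\sqrt{M}ld}{\epsilon}=2d\ln(4\sqrt{2d}\,l\,d)=2d\ln\!\big(2^{2.5}d^{1.5}l\big)$, matching the stated constants, with $\alpha_{\min}=\min_{\vb*\theta}\tr[\mathcal{M}_{\mathcal{N}}(\vb*\theta)]$ unwinding to $\min_{\vb*\theta}\tr[\mathcal{N}_A\otimes\mathbb{I}_B(\ketbra{\phi(\vb*\theta)})]/\sqrt{\tr[(\mathcal{N}_A\otimes\mathbb{I}_B(\ketbra{\phi(\vb*\theta)}))^2]}$ as displayed in the corollary.

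The step I expect to be the main obstacle is the first one: making the equivalence between ``$\rho$ is detected by $\mathcal{N}$'' and ``$\exists\vb*\theta:\tr[\rho\,\mathcal{M}_{\mathcal{N}}(\vb*\theta)]<0$'' airtight --- tracking whether $\mathcal{N}$ or $\mathcal{N}^{\dagger}$ should appear inside $\mathcal{M}_{\mathcal{N}}$, confirming that positivity of a map is preserved both under taking adjoints and under the extension $\otimes\,\mathbb{I}_B$ (so that each $\mathcal{M}_{\mathcal{N}}(\vb*\theta)$ really is a valid EW), and ensuring the normalizing factor $\norm{\mathcal{N}_A\otimes\mathbb{I}_B(\ketbra{\phi(\vb*\theta)})}_F$ stays bounded away from zero over $\vb*\theta$ so that $\mathcal{M}_{\mathcal{N}}$ is well defined and Lipschitz. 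Once that dictionary is in place, everything else is bookkeeping layered on Theorem \ref{theorem:paraEWsinapp}.
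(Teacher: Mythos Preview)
Your proposal is correct and matches the paper's approach exactly: the paper's proof is the single sentence ``It follows directly from Theorem~\ref{theorem:paraEWsinapp} by choosing $\epsilon=0.5$ and $M=2d$,'' and your substitution reproduces the stated $C_1$ and $C_2$. The extra care you take---verifying the EW property of each $\mathcal{M}_{\mathcal{N}}(\vb*\theta)$ via the adjoint, and flagging the $\mathcal{N}$ vs.\ $\mathcal{N}^{\dagger}$ bookkeeping---fills in details the paper leaves implicit in the text preceding the corollary, but does not constitute a different argument.
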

\begin{proof}
It follows directly from Theorem~\ref{theorem:paraEWsinapp} by choosing $\epsilon=0.5$ and $M=2d$.
\end{proof}
Also take the PPT criterion as an example, where $\mathcal{M}_{\mathrm{PPT}}(\vb*\theta)=\ketbra{\phi(\vb*\theta)}^{{T_A}}$. It can be easily proved that the partial transposition map is $\sqrt{2}$-Lipschitz and $\alpha_{\mathrm{min}}=1$. First, we give the relationship between the $F$-norm of the density matrix representation and the $2$-norm of the real-valued vector representation. 
\begin{equation}
\begin{split}
    \norm{{\vb*\theta}-{\vb*\theta'}}_2^2&=\norm{\ket{\phi(\vb*\theta)}-\ket{\phi(\vb*\theta')}}_2^2\\&=2-2\Re(\braket{\phi(\vb*\theta)}{\phi(\vb*\theta')})\\&\geq 2-2\abs{\bra{\phi(\vb*\theta)}\ket{\phi(\vb*\theta')}},
\end{split}
\end{equation}
\begin{equation}
\begin{split}
    \norm{\ketbra{\phi(\vb*\theta)}-\ketbra{\phi(\vb*\theta')}}_F^2&=2-2\abs{\bra{\phi(\vb*\theta)}\ket{\phi(\vb*\theta')}}^2\\&=(2-2\abs{\bra{\phi(\vb*\theta)}\ket{\phi(\vb*\theta')}})(1+\abs{\bra{\phi(\vb*\theta)}\ket{\phi(\vb*\theta')}})\\&\leq 2\norm{{\vb*\theta}-{\vb*\theta'}}_2^2
\end{split}
\end{equation}
So the map $\mathcal{M}(\vb*\theta)=\ketbra{\phi(\vb*\theta)}$ is $\sqrt{2}$-Lipschitz. 

For partial transposition map, $\mathcal{M}(\vb*\theta)=\ketbra{\phi(\vb*\theta)}^{{T_B}}$, it is normalized by itself,
\begin{equation}
    \norm{\mathcal{M}(\vb*\theta)}=\norm{\ketbra{\phi(\vb*\theta)}^{{T_B}}}_F=\norm{\ketbra{\phi(\vb*\theta)}}_F=1
\end{equation}
and
\begin{equation}
\begin{split}
        \norm{\mathcal{M}(\vb*\theta)-\mathcal{M}(\vb*\theta')}&=\norm{\ketbra{\phi(\vb*\theta)}^{T_B}-\ketbra{\phi(\vb*\theta)}^{T_B}}_F\\&=\norm{\ketbra{\phi(\vb*\theta)}-\ketbra{\phi(\vb*\theta)}}_F\\&\leq \sqrt{2}\norm{\vb*\theta-\vb*\theta'}_2.
\end{split}
\end{equation}
Therefore partial transposition map is $\sqrt{2}$-Lipschitz. Corollary \ref{coro:PNCP} shows that for $k=\Omega(d\ln d)$, the PPT criterion can hardly detect any entanglement, which meets the former results \cite{nidari2007witness,bhosale2012entanglement,shapourian2021diagram}. 

For faithful EW, the parameterized EW can be defined as
\begin{equation}
\mathcal{M}(\vb*\theta)=\pqty{\sqrt{2-\frac{2}{\sqrt{d}}}}^{-1}\pqty{\frac{\mathbb{I}}{\sqrt{d}}-\ketbra{\phi(\vb*\theta)}}
\end{equation}
where $\pqty{\sqrt{2-\frac{2}{\sqrt{d}}}}^{-1}\leq 1$ is a factor to ensure $ \mathcal{M}(\vb*\theta)$ is normalized, then
\begin{equation}
\begin{split}
        \norm{\mathcal{M}(\vb*\theta)-\mathcal{M}(\vb*\theta')}_F&=\pqty{\sqrt{2-\frac{2}{\sqrt{d}}}}^{-1}\norm{\ketbra{\phi(\vb*\theta)}-\ketbra{\phi(\vb*\theta')}}_F\\&\leq \sqrt{2}\norm{\vb*\theta-\vb*\theta'}_2.
\end{split}
\end{equation}
So faithful map is also $\sqrt{2}$-Lipschitz with $2d$ real parameters. Combined with the fact that $\alpha_{\min}=\sqrt{\frac{d-\sqrt{d}}{2}}$, we have
\begin{corollary}[Ratio of Faithful Entanglement States]
The set of faithful entangled states has an exponentially small ratio in the state space:
\begin{equation}
        \Pr_{\rho\sim \pi_{d,k}}[\rho\in \mathrm{FE}]=\mathcal{C}^p_k\pqty{\mathcal{M}_{\mathrm{faithful}}}< 2e^{C_1-C_2k}
\end{equation}
where $\mathrm{FE}$ is the set of all faithful entangled states and $C_1=3d\ln 4d$, $C_2=\pqty{\sqrt{0.5+\sqrt{\frac{d-\sqrt{d}}{2}}}-1}^2\approx \sqrt{\frac{d}{2}}$.
\end{corollary}

\section{Detection Capability Upper Bound of Single-copy Criteria}\label{sec:singlecopy}
\subsection{Proof of Theorem 4}

\begin{theorem}[Detection Capability of Single-Copy Criteria]
\label{theorem:singlecopy_inapp}
Any single-copy entanglement criterion $\mathcal{O}$ with $M-1$ observables has detection capability:
\begin{equation}
    \mathcal{C}^s_k(\mathcal{O})\leq 2e^{C_1-C_2k}
\end{equation}
Where $C_1=M\ln \frac{2\sqrt{M}d}{\epsilon}$, $C_2=(\sqrt{2-\epsilon}-1)^2$. $0<\epsilon<1$ is an arbitrary number. By choosing $\epsilon=0.5$, we have the original theorem in the main text.
\end{theorem}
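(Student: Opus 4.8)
The plan is to reduce an arbitrary single-copy criterion to a parameterized EW criterion and then invoke Theorem~\ref{theorem:paraEWsinapp}. Without loss of generality I append the normalized identity $O_0=\mathbb{I}/\sqrt{d}$ to the measured observables (knowing $\tr(\mathbb{I}\rho)=1$ costs nothing), so that the data is a set of $M$ mutually orthonormal Hermitian operators $O_0,\dots,O_{M-1}$, and $F_{\mathcal O}(\rho)\cap\mathrm{SEP}$ equals $\mathcal{A}_\rho\cap\mathrm{SEP}$, where $\mathcal{A}_\rho=\rho+V^\perp$ is the affine subspace of Hermitian matrices agreeing with $\rho$ on all $O_j$, $V=\mathrm{span}\{O_0,\dots,O_{M-1}\}$, and orthogonality is taken in the trace inner product $\langle A,B\rangle=\tr(AB)$. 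The reduction to $\mathcal{A}_\rho\cap\mathrm{SEP}$ uses only that $\mathrm{SEP}$ already lies inside the set of density matrices.

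The core step is the following reduction: if $\rho$ is detected, i.e.\ $\mathcal{A}_\rho\cap\mathrm{SEP}=\varnothing$, then $\rho$ is detected by some normalized member of the $M$-parameter family $\mathcal{M}(\vb*\theta)=\sum_{j=0}^{M-1}\theta_jO_j$. Indeed, $\mathcal{A}_\rho$ is a closed affine subspace and $\mathrm{SEP}$ is compact convex, so the separating-hyperplane theorem gives a Hermitian $H$ and a scalar $c$ with $\tr(H\sigma)\ge c>\tr(H\tau)$ for all $\sigma\in\mathrm{SEP}$ and $\tau\in\mathcal{A}_\rho$. Since $\tr(H\,\cdot\,)$ is bounded above on the affine set $\rho+V^\perp$, it must vanish on $V^\perp$, hence $H\in V$. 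Put $W=H-c\mathbb{I}$: because every state has unit trace, $\tr(W\sigma)=\tr(H\sigma)-c\ge0$ on $\mathrm{SEP}$ while $\tr(W\rho)=\tr(H\rho)-c<0$, and $W=H-c\sqrt{d}\,O_0\in V$ with $W\neq0$. Hence $\mathcal{M}(\vb*\theta):=W/\norm{W}_F$ is a valid normalized EW lying in the family, its parameter vector satisfies $\norm{\vb*\theta}_2=1$ so $\vb*\theta\in[-1,1]^M\cap\Theta$, and $\tr(\rho\,\mathcal{M}(\vb*\theta))<0$. Therefore $\mathcal{C}^s_k(\mathcal{O})\le\mathcal{C}^p_k(\mathcal{M})$.

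It then remains to feed $\mathcal{M}$ into Theorem~\ref{theorem:paraEWsinapp} with the right parameters. Orthonormality of $\{O_j\}$ gives $\norm{\mathcal{M}(\vb*\theta)-\mathcal{M}(\vb*\theta')}_F=\norm{\vb*\theta-\vb*\theta'}_2$, so $\mathcal{M}$ is normalized and $1$-Lipschitz, i.e.\ $l=1$. For every $\vb*\theta\in\Theta$ the operator $\mathcal{M}(\vb*\theta)$ is a valid normalized EW, so the restriction on valid EWs (Lemma~1) forces $\tr[\mathcal{M}(\vb*\theta)]=\tr[\mathcal{M}(\vb*\theta)]/\sqrt{\tr[\mathcal{M}(\vb*\theta)^2]}\ge1$, whence $\alpha_{\min}\ge1$. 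Substituting $l=1$, $M$ parameters, and $\alpha_{\min}\ge1$ into Theorem~\ref{theorem:paraEWsinapp} yields $C_1=M\ln\frac{2\sqrt{M}d}{\epsilon}$ and a decay rate $(\sqrt{1+\alpha_{\min}-\epsilon}-1)^2\ge(\sqrt{2-\epsilon}-1)^2$; since replacing the decay rate by the smaller value $(\sqrt{2-\epsilon}-1)^2$ only weakens the bound, we obtain $\mathcal{C}^s_k(\mathcal{O})\le2\mathrm{e}^{C_1-C_2k}$ with $C_2=(\sqrt{2-\epsilon}-1)^2$, and $\epsilon=0.5$ recovers the main-text constants.

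The step I expect to be the main obstacle is the reduction: showing that the separating hyperplane can be taken inside the low-dimensional span $V$ of the measured observables, so that it is literally a member of the $M$-parameter family rather than an arbitrary EW. The resolution is the elementary observation that a linear functional bounded above on an entire affine subspace is constant on it and hence annihilates the corresponding direction space; the subsequent shift by $c\mathbb{I}$ turning the separating functional into a bona fide EW is harmless since all density matrices have trace one. Everything after the reduction is a direct citation of Theorem~\ref{theorem:paraEWsinapp}.
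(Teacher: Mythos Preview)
Your proposal is correct and follows essentially the same route as the paper: append the identity, extend the feasible region to the affine subspace of Hermitian matrices, separate from $\mathrm{SEP}$ by a hyperplane, argue that boundedness on an affine subspace forces the separating functional into $V=\mathrm{span}\{O_j\}$, and then apply Theorem~\ref{theorem:paraEWsinapp} with $l=1$ and $\alpha_{\min}\ge1$. The only cosmetic difference is that you make the shift $W=H-c\mathbb{I}$ explicit, whereas the paper states the separation directly with threshold zero; since $\mathbb{I}\in V$ and all matrices in both sets have unit trace, this is the same step.
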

\begin{proof}
Without loss of generality, we add $O_M=\frac{\mathbb{I}}{\sqrt{d}}$ to the set, so $\mathcal{O}$ has $M$ observables now. We further assume $\mathcal{O}$ is mutually orthonormal in the operator space $\tr(O_iO_j)=\delta_{ij}$. If this condition is not satisfied, we can normalize and orthogonalize the operator set without changing the feasible region. 
Given the observable set $\mathcal{O}$ with $M$ observables and the measurement result
\begin{equation}
    r_{\rho,i}=\tr(O_i\rho), i=1\cdots M,
\end{equation}
the quantum state is restricted in the feasible region defined as
\begin{equation}
    F_\mathcal{O}(\rho)=\Bqty{\sigma\in\mathcal{D}(\mathcal{H}_d)|\tr(O_i\sigma)=r_{\rho,i},i=1\cdots M}.
\end{equation}
If the feasible region is disjoint with SEP, then the entanglement is successfully detected by $\mathcal{O}$. Therefore, the detection capability of $\mathcal{O}$ is defined as
\begin{equation} 
        \mathcal{C}^s_k(\mathcal{O})=\Pr_{\rho\sim\pi_{d,k}}\bqty{F_{\mathcal{O}}(\rho)\cap \mathrm{SEP}=\varnothing}.
\end{equation}

To benefit our proof, we extend the definition of $F_{\mathcal{O}}(\rho)$ from density states to Hermitian matrices, define
\begin{equation}
    F_{\mathcal{O}}'(\rho)=\Bqty{\sigma|\tr(O_i\sigma)=r_{\rho,i},\sigma^\dagger=\sigma}.
\end{equation}
It is easy to prove that $F_{\mathcal{O}}(\rho)\cap \mathrm{SEP}=\varnothing$ if and only if $F_{\mathcal{O}}'(\rho)\cap \mathrm{SEP}=\varnothing$ as SEP is only in the density matrices set. By definition, SEP and $F_{\mathcal{O}}(\rho)$ are all convex sets. Hence, from the hyperplane separation theorem, we can find Hermitian operators $W$ that separate SEP and $F_{\mathcal{O}}'$
\begin{equation}
    \exists W: \tr(W\sigma)< 0, \forall \sigma\in F_{\mathcal{O}}'(\rho)\text{ and }\tr(W\sigma')\geq 0, \forall \sigma'\in \mathrm{SEP}\label{proof of theorem3:step1},
\end{equation}
which is also an EW separate SEP and $F_{\mathcal{O}}(\rho)$. It can be proved that $W$ must have the form
\begin{equation}
    W=\sum_{i=1}^{M} \theta_i O_i.
\end{equation}
If not, suppose $W=\sum_i \theta_i O_i+\tilde{O}$, where $\tilde{O}\neq 0$ is orthogonal to each $O_i\in\mathcal{O}$, Then for any $\sigma\in F_{\mathcal{O}}'(\rho)$, $\sigma+C\tilde{O}\in F_{\mathcal{O}}'(\rho)$ where $C$ is an arbitrary real number. In this scenario, $\tr\left((\sigma+C\tilde{O})W\right)=C\tr(\tilde{O}^2)$ can be arbitrary large, which contradicts the requirement \eqref{proof of theorem3:step1}.

Accordingly, the entangled states that can be detected by $\mathcal{O}$ can also be detected by the following parameterized EW
\begin{equation}
    \mathcal{M}(\vb*\theta)=\sum_{i=1}^{M} \theta_i O_i
\end{equation}
where $\vb*\theta\in\Theta$ and $\Theta$ is constituted by all $\vb*\theta$ such that $\sum_i\theta_i^2=1$ and makes $\mathcal{M}(\vb*\theta)$ a valid EW. Therefore, the detection capability of the single-copy criteria $\mathcal{O}$ is bounded by the detection capability of the parameterized EW $\mathcal{M}$
\begin{equation}
    \mathcal{C}^s_k(\mathcal{O})\leq \mathcal{C}^p_k(\mathcal{M}).
\end{equation}
Such parameterized EW with $M$ parameters is normalized, and $1$-Lipschitz
\begin{equation}
\norm{W(\vb*\theta)-W(\vb*\theta)'}^2_F=\sum_i(\theta_i-\theta_i')^2\tr(O_i^2)=\norm{\vb*\theta-\vb*\theta'}^2_2.
\end{equation}
By directly applying Theorem \ref{theorem:paraEWsinapp}, we have
\begin{equation}
\mathcal{C}^s_k(\mathcal{O})< 2e^{M\ln\frac{2\sqrt{M}d}{\epsilon}-(\sqrt{2-\epsilon}-1)^2}
\end{equation}
Where $0<\epsilon<1$ is an arbitrary number. 
\end{proof}

\subsection{Adaptive Single-Copy Measurement}
The most general method to detect entanglement may take advantage of adaptive measurements. After the previous $j-1$ measurement, One can determine $O_j$ as a function of previous measurement results. Here we consider a case where each measurement or query gives $1$ bit of information.
\begin{definition}[Measurement with $1$ Bit Information]
The measurement can be viewed as a quantum oracle, given an observable $O$, the oracle will output $\mathrm{sign}(\tr(O\rho))$, more specifically, $+1$ if $\tr(O\rho)\geq 0$ and $-1$ if $\tr(O\rho)< 0$. 
\end{definition}
To determine whether $\tr(O\rho)\geq c$, one may simply replace $O$ by $O-c\mathbb{I}$. To determine any observable up to $\epsilon$ precision, one may use a binary search method with $O(\ln \frac{1}{\epsilon})$ queries. Next, we define the most general adaptive single-copy measurement where observables may depend on previous results. Formally, we define:
\begin{definition}[Adaptive Single-Copy Protocols]
An adaptive single-copy entanglement detection protocol with finite precision contains a program $\mathcal{P}$ that can generate an observable based on the previous results. More specifically, after the previous $j-1$ measurement, one get the measurement results $(k_1...k_{j-1})\in\{-1,+1\}^{j-1}$. Based on these result, the program can generate $O_j=f_j(k_i...k_{j-1})$. After $M$ iterations, one gets the following equations:
\begin{equation}
\mathrm{sign}(\tr(O_i\rho))=k_i\in\{-1,+1\},\forall i=1,\cdots, M
\end{equation}
We can still define the feasible set
\begin{equation}
    F_\mathcal{P}(\rho)=F_\mathcal{K}(k_1,\cdots, k_M)=\Bqty{\sigma\in\mathcal{D}(\mathcal{H}_d)|\mathrm{sign}(\tr(O_i\sigma))=k_i,i=1,\cdots,M}
\end{equation}
And the detection capability is similarly defined as
\begin{equation} 
    \mathcal{C}_k^s(\mathcal{P})=\Pr_{\rho\sim\pi_{d,k}}\bqty{F_{\mathcal{P}}(\rho)\cap \mathrm{SEP}=\varnothing}
\end{equation}
\end{definition}
Use the measurement outcome $\vb* k=(k_1,\cdots, k_M)\in\{-1,1\}^{M}$ to rewrite the previous definition:
\begin{equation}
    F_{\mathcal{P}}(\rho)\cap \mathrm{SEP}=\varnothing\Longleftrightarrow \exists \vb* k: \rho\in F_\mathcal{K}(\vb* k) \text{ and } F_{\mathcal{K}}(\vb* k)\cap \mathrm{SEP}=\varnothing
\end{equation}
So that
\begin{equation}
\begin{split}
        \mathcal{C}_k^s(\mathcal{P})&=\Pr_{\rho\sim\pi_{d,k}}\bqty{\exists \vb* k: \rho\in F_\mathcal{K}(\vb* k) \text{ and } F_{\mathcal{K}}(\vb* k)\cap \mathrm{SEP}=\varnothing}\\&=\sum_{\vb* k} \Pr_{\rho\sim\pi_{d,k}}\bqty{\rho\in F_\mathcal{K}(\vb* k) \text{ and } F_{\mathcal{K}}(\vb* k)\cap \mathrm{SEP}=\varnothing}
\end{split}
\end{equation}
Notice that for any $\vb* k$, $F_\mathcal{K}(\vb* k)$ is a convex set. So if $F_{\mathcal{K}}(\vb* k)\cap \mathrm{SEP}=\varnothing$, then by the hyperplane separation theorem, there exists an EW $W$ s.t. 
\begin{equation}
    \tr(W\sigma)< 0, \forall \sigma\in F_\mathcal{K}(\vb* k)\text{ and }\tr(W\sigma')\geq 0, \forall \sigma'\in \mathrm{SEP}
\end{equation}
According to Theorem \ref{theorem:singleEW}, each term in the summation is bounded by $2e^{-(3-2\sqrt{2})k}$. And there are a total of $2^M$ different terms in the summation,
\begin{equation}
 \mathcal{C}_k^s(\mathcal{P})\leq 2^{M+1}e^{-(3-2\sqrt{2})k}=2e^{M\ln2-(3-2\sqrt{2})k}.
\end{equation}
So the detection capability of any adaptive single-copy method also suffers from exponential decay.

\subsection{Details of Figure 3}
In Fig.~3 of the main text, we use four entanglement criteria to demonstrate our conclusion of the single-copy criteria. We explicitly list them here. Suppose the state $\rho_{AB}$ we consider is bipartite with subsystems $A$ and $B$.
\begin{enumerate}
    \item Purity \cite{GUHNE2009detection}:
    \begin{equation}
    \forall \rho_{AB}\in \mathrm{SEP}:\tr(\rho_{AB}^2)\le \tr(\rho_A^2).
    \end{equation}
    \item Fisher Information \cite{Zhang_2020}:
    \begin{equation}
       \forall \rho_{AB}\in \mathrm{SEP}: F(\rho, A \otimes I+I \otimes B) \leq \Delta(A \otimes I-I \otimes B)_{\rho}^{2},
    \end{equation}
where
\begin{equation}
F(\rho, A)=\sum_{k, l} \frac{(\lambda_k-\lambda_l)^2}{2\left(\lambda_{k}+\lambda_{l}\right)}\abs{\mel{k}{A}{l}}^2,
\end{equation}

\begin{equation}
    \rho=\sum_k \lambda_k\ketbra{k},
\end{equation}
and
\begin{equation}
    \Delta(A)_{\rho}^{2}=\expval{A^2}_\rho-\expval{A}^2_\rho.
\end{equation}
Since this criterion holds for any observable $A$ and $B$, we randomly choose $10$ different $A$s and $B$s to build a series of criteria. If any of them is violated, the state is classified as entangled.
    \item $M_4$ \cite{liu2022detecting}:
    \begin{equation}
       \forall \rho_{AB}\in \mathrm{SEP}: E_{4}(\rho_{AB}-\rho_A\otimes\rho_B)\le \sqrt{(1-\tr(\rho_A^2))(1-\tr(\rho_B^2))}.
    \end{equation}
        where $E_{4}(\rho)=\sqrt{\frac{q(q M_2+U)}{q+1}}+\sqrt{\frac{M_2-U}{q+1}}$,  $q=\lfloor\frac{M_2^2}{M_4}\rfloor$, $U=\sqrt{q(q+1)M_4-qM_2^2}$, $M_2=\tr(\rho_{AB}^2)$, $M_4=\tr[(\mathbb{S}_A^{(1,2)}\otimes\mathbb{S}_A^{(3,4)}\otimes\mathbb{S}_B^{(2,3)}\otimes\mathbb{S}_B^{(4,1)})\rho_{AB}^{\otimes 4}]$, and $\mathbb{S}_A^{(i,j)}$ is the SWAP operator acting on the $i$-th and $j$-th copies of subsystem $A$.
    \item $D_{3,\mathrm{opt}}$ \cite{yu2021optimal,neven2021symmetry}:
    \begin{equation}
        \forall \rho_{AB}\in \mathrm{SEP}: \beta x^3+(1-\beta x)^3\le\tr\left((\rho_{AB}^{T_B})^3\right),
    \end{equation}
    where $\beta=\lfloor\frac{1}{\tr(\rho_{AB}^2)}\rfloor$ and $x=\frac{\beta+\sqrt{\beta\left((\beta+1)\tr(\rho_{AB}^2)-1\right)}}{\beta(\beta+1)}$.
    \end{enumerate}

\subsection{More Numerical Experiments}
\subsubsection{Relationship between the threshold $k_{th}$ and $d$}
In Fig.~\ref{smfig:3criteria}, we show the detection capability of purity, $M_4$, and $D_{3,\mathrm{opt}}$ criteria. All the curves have two regimes: constant and exponential decay with $k$. Denote the turning point between these two regimes to be $k_{th}$, beyond which the criterion becomes ineffective. For different dimensions $d$, it is interesting to study the threshold $k_{th}$ for different criteria. From the figure, the thresholds $k_{th}$ for Purity, $M_4$, and $D_{3,\mathrm{opt}}$ are approximately $\sqrt{d}$, $0.6d$, and $d$, respectively. These polynomial relations show that an exponential number of observables are needed to verify these criteria with only single-copy observables. In fact, all three criteria require the number of observables larger than $\Omega(k_{th}/\ln k_{th})$ with the best-known randomized measurements. This is consistent with Theorem \ref{theorem:singlecopy_inapp}.

\begin{figure}[ht]
 \centering
 \includegraphics[width=16cm]{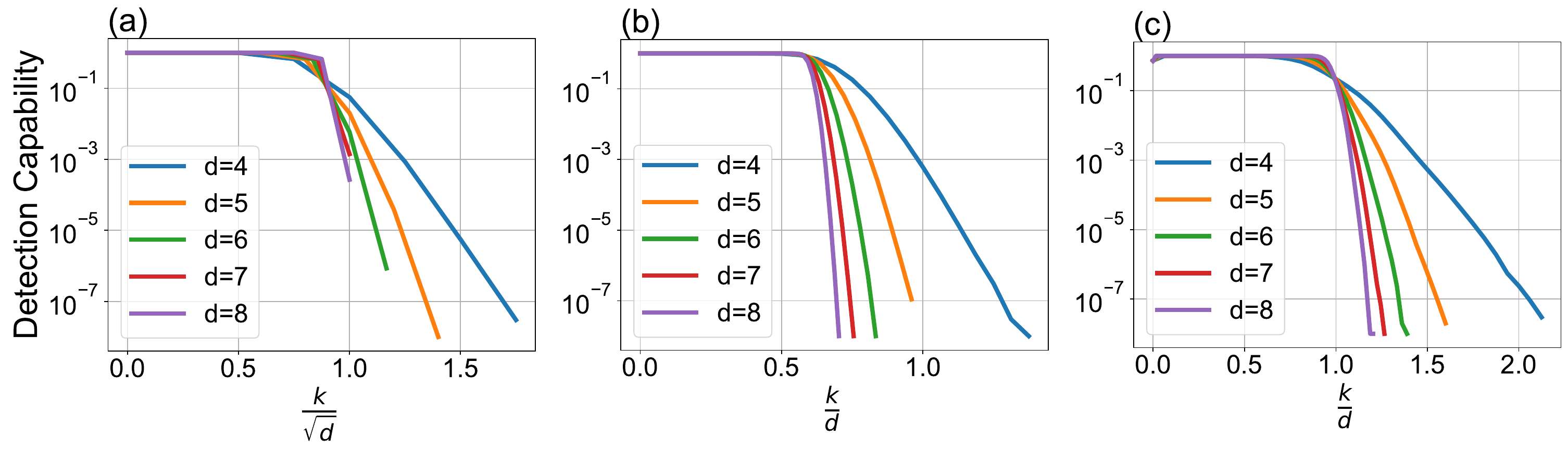}
\caption{The three figures represents the detection capability of purity, $M_4$ and $D_{3,\mathrm{opt}}$ from (a) to (c) respectively. In the exponential decaying period, each line represents $d=4,5,6,7,8$ from right to left. The $y$ axes are the detection capability, and $x$ axes are $k$ divided by factors depending on $d$. The factors are chosen so that the relation between $k_{th}$ and $d$ can be easily observed. Each point is generated through $10^8$ independent experiments.}
\label{smfig:3criteria}
\end{figure}

\subsubsection{Numerical experiments on random thermal states}
In the proof of the theorems, we assume distribution $\pi_{d,k}$. Obviously, the results cannot hold for all distributions. For example, if the states only distribute around a particular maximally entangled state, we can easily design an effective EW to witness these states. In this case, we already assume lots of prior information about the states. Without such strong prior information, the states are more evenly distributed over the state space. Then, if the state distribution is approximately symmetric around the maximally mixed state, the theorems should also hold. Here, we present another typical state distribution as an example and leave detailed studies for future work.

Here, we numerically examine the detection capability of the three criteria with random thermal states in Fig.~\ref{smfig:thermal}. The detection capability also suffers from exponential decay after a constant period. As $T$ increases, the purity of states decreases, just like the case when $k$ increases in the $\pi_{d,k}$ distribution. This is compatible with the theorems.

\begin{figure}[ht]
 \centering
 \includegraphics[width=8cm]{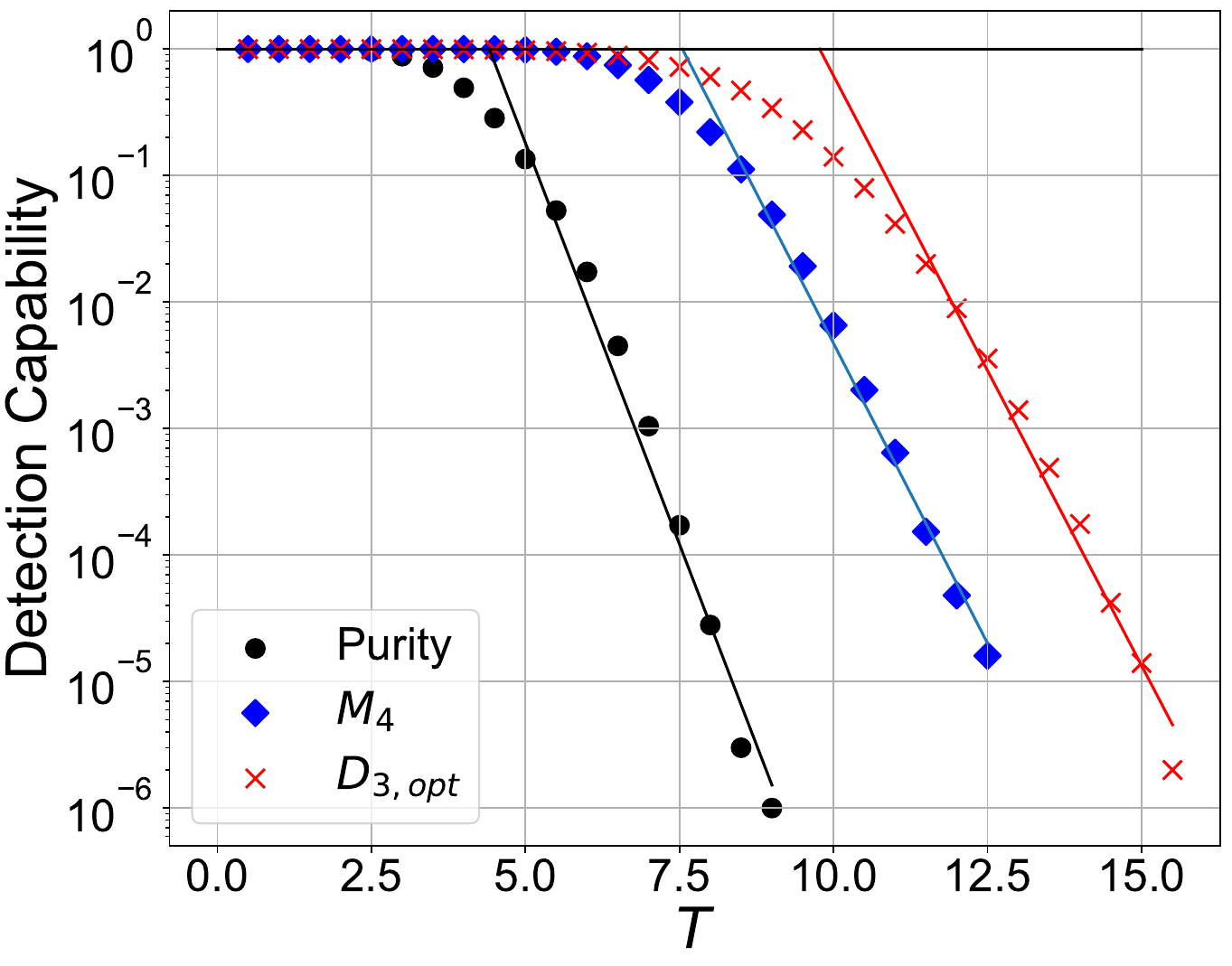}
\caption{This figure represents the detection capability of purity, $M_4$ and $D_{3,\mathrm{opt}}$ criteria respectively with regard to the temperature. We first generate a random Hamiltonian according to the Gaussian orthogonal ensemble, then calculate $\rho=\frac{e^{-\beta H}}{\tr(e^{-\beta H})}$ as the random density matrix, where $\beta=\frac{1}{T}$. Each point is generated through $10^6$ independent experiments.}
\label{smfig:thermal}
\end{figure}

\end{document}